\documentclass[a4paper,twocolumn,11pt,accepted=2025-10-22]{quantumarticle}
\pdfoutput=1
\usepackage[utf8]{inputenc}
\usepackage[english]{babel}
\usepackage[T1]{fontenc}
\usepackage{amsmath}
\usepackage{hyperref}

\usepackage{tikz}
\usepackage{lipsum}
\usepackage{amsmath,amsfonts,amsthm, mathtools}
\usepackage{float}
\usepackage{dsfont}
\usepackage{csquotes}
\usepackage{tocloft}
\usepackage{titletoc}
\usepackage{amssymb}
\usepackage{verbatim}
\usepackage{appendix}
\usepackage{bm}
\usepackage{multirow}

\definecolor{darkred}  {rgb}{0.5,0,0}
\definecolor{darkblue} {rgb}{0,0,0.5}
\definecolor{darkgreen}{rgb}{0,0.5,0}
\hypersetup{
  colorlinks = true,
  urlcolor  = blue,         
  linkcolor = darkblue,     
  citecolor = darkgreen,    
  filecolor = darkred       
}
\theoremstyle{definition}

\newtheorem{corollary}{Corollary}
\newtheorem{definition}{Definition}
\newtheorem{conjecture}{Conjecture}
\newtheorem{lemma}{Lemma}
\newtheorem{proposition}{Proposition}
\newtheorem{theorem}{Theorem}

\newtheorem{result}{Result}

\newtheorem*{remark}{Remark}

\newcommand{\mbb}{\mathbb}
\newcommand{\mc}{\mathcal}
\newcommand{\msf}{\mathsf}
\newcommand{\mf}{\mathfrak}

\newcommand{\tr}{\textrm{Tr}}
\newcommand{\wt}{\widetilde}

\usepackage{multirow}
\newcommand{\ket}[1]{|#1\rangle}
\newcommand{\bra}[1]{\langle #1|}
\newcommand{\op}[2]{|#1\rangle\langle#2|}

\newcommand{\inr}{\text{inr}}
\newcommand{\rot}{\text{rot}}

\definecolor{cool_green}{rgb}{0.0, 0.5, 0.0}

\newcommand{\yujie}{\color{black}}
\newcommand{\blk}{\color{black}}
\newcommand{\norm}[1]{\left\lVert#1\right\rVert}
\usepackage{blkarray}
\usepackage{graphicx}
\usepackage{dcolumn}
\usepackage{bm}

\begin{document}

\title{Cost of Simulating Entanglement in Steering Scenarios}

\author{Yujie Zhang}
 \affiliation{Department of Physics, University of Illinois at Urbana-Champaign, Urbana, IL 61801, USA}
\orcid{0000-0002-7858-7476}
\email{yujie4physics@gmail.com}
\author{Jiaxuan Zhang}%
 \affiliation{Department of Physics, University of Illinois at Urbana-Champaign, Urbana, IL 61801, USA}
\author{Eric Chitambar}
\email{echitamb@illinois.edu}
\affiliation{
Department of Electrical and Computer Engineering, University of Illinois at Urbana-Champaign, Urbana, IL 61801, USA
}%
\maketitle
\begin{abstract}
Quantum entanglement is a fundamental feature of quantum mechanics, yet certain entangled states that are unsteerable can be classically simulated in steering scenarios, making them unable to exhibit quantum steering. Despite their significance, a systematic comparison of such entangled states has not been explored. In this work, we quantify the resource content of unsteerable quantum states in terms of the amount of shared randomness required to simulate the assemblages they generate in the steering scenario. We rigorously demonstrate that the simulation cost is unbounded even for certain unsteerable two-qubit states. Moreover, the simulation cost of entangled two-qubit states is always strictly larger than that for any separable state. \par 

A significant portion of our results rests on the relationship between the simulation cost of two-qubit Werner states and that of noisy spin measurements. Using noisy spin measurements as our central example, we also investigate the minimum number of outcomes a parent measurement requires to simulate a given set of compatible measurements. Although certain continuous measurement families admit a finite-outcome parent measurement, we identify scenarios where the simulation cost is unbounded. Our results establish previously unknown lower bounds and upper bounds on the shared randomness simulation cost, supported by connections between the simulation cost of noisy spin measurements and various geometric inequalities, including ones from the zonotope approximation problem in Banach space theory.
\end{abstract}

\section{Introduction}
In quantum information theory, entanglement is recognized as a resource because it enables certain dynamical processes (i.e., channels)  that would otherwise be impossible, or at least more costly, to perform.  For instance, shared entanglement can be used in conjunction with some noisy classical channel to increase its zero-error transmission \cite{Cubitt-2010a, Leung-2012a}.  Another dramatic effect is quantum teleportation, in which a classical channel is combined with shared entanglement to generate a perfect quantum channel \cite{Bennett-1993a}.  In these examples, entanglement is used for the purpose of channel coding as it helps transform a noisy channel into a noiseless one.  The converse task, analogous to the ``Reverse Shannon'' problem \cite{Bennett-2002a, Cubitt-2011a}, considers how much noiseless resource is needed to simulate a noisy entanglement-enhanced setup.  Here the goal is not communication but rather simulation.  From this perspective, one can quantify the resource value of an entangled state by how much classical resources is needed to simulate the dynamics that it generates.

Beyond the standard entanglement-assisted paradigm, there are various other ways in which a bipartite entangled state can be used to build a channel \cite{Schmid-2020a}, and each of them will typically have a different simulation cost. In this paper, we focus on the type of channels that emerge in quantum steering setups \cite{Uola2020}.  These are classical-to-classical-quantum (c-to-cq) channels that are generated when Alice measures her half of the state $\rho_{AB}$ in some chosen manner, thereby preparing Bob's system into one of many possible post-measurement states.  From a fundamental perspective, these channels are vital to the study of quantum nonlocality \cite{Wiseman2007}.  But more practically they have important applications in semi-device-independent entanglement witnessing \cite{Cavalcanti-2017a} and operational games \cite{Piani-2015a}.  

\begin{figure}[b]
    \centering
\includegraphics[width=0.32\textwidth]{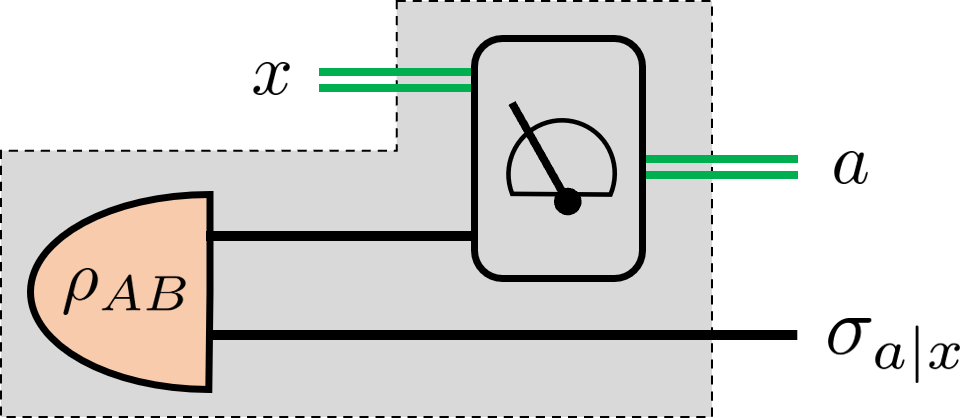}
    \caption{The bipartite state $\rho_{AB}$ induces a c-to-cq channel represented by the assemblage $\mc{A}=\{\sigma_{a|x}\}$.  }
    \label{fig:resource_transformation}
\end{figure}

As shown in Fig. \ref{fig:resource_transformation}, each input $x$ to the channel corresponds to a local measurement choice for Alice, which we collectively represent by a family of positive operator-valued measures (POVMs) $\{M_{a|x}\}_{a,x}$.  She receives a classical output $a$ while Bob receives the (unnormalized) quantum output 
\begin{equation}
\label{Eq:channel-conversion}
\sigma_{a|x}=\tr_A[(M_{a|x}\otimes\mbb{I})\rho_{AB}]
\end{equation}
with probability $p(a|x)=\tr[\sigma_{a|x}]$. It is convenient to denote a c-to-cq channel by the set of labeled quantum outputs $\mc{A}=\{\sigma_{a|x}\}$, called a state assemblage \cite{Uola2020}, and we write $\rho_{AB}\mapsto\mc{A}$ when Eq. \eqref{Eq:channel-conversion} holds.

To quantify the classical simulation cost of entanglement, many previous works adopt the feed-forward classical communication cost (measured in bits) as their figure of merit. \yujie  Notable examples include the trade-off between ebits and cbits in quantum-channel simulation \cite{Bennett2014}, the communication cost of reproducing Bell nonlocal correlations \cite{Toner2003,Renner2023} and the communication cost of reproducing quantum steering \cite{Sainz2016}. However, any entangled state admitting a local hidden variable (LHV) model or a local hidden state (LHS) model can already be simulated with zero forward communication in its corresponding scenario. This causes communication-based metrics to classify such states as ``free", even though Bell-local states or unsteerable states that are entangled remain useful in many quantum applications\cite{Cavalcanti, Piani2009}.  Moreover, implementing an LHV model or an LHS model generally demands a nontrivial amount of shared randomness, whose distribution itself, in fact, must invoke some sort of prior communication. Therefore, a complete quantification of classical simulation cost should account not only for the communication cost but also for the shared randomness cost needed to reproduce the desired correlations.

In this paper, we study the simulation of quantum correlations using only shared randomness—with no forward communication. While shared randomness alone cannot produce Bell nonlocal correlations or demonstrate quantum steering, it can reproduce the measurement statistics of a broad class of entangled states \cite{Hirsch2016,Bowles2015}. There is currently no general method to quantify this share randomness cost, particularly to obtain tight lower bounds on the required amount of shared randomness. In this work, we focus on simulation cost problem in the steering scenario. Although less commonly considered than the Bell scenario, its comparatively simple characterization lets us uncover several key features of the shared randomness simulation cost, including its unboundedness and its operational relevance for separating separable from entangled states.

The benefit of a systematic study of shared randomness simulation cost is twofold. On the one hand, unbounded shared randomness has long been invoked in LHV or LHS models for steering or Bell nonlocality \cite{Werner1989, Wiseman2007} and is often believed to be necessary. Yet, a rigorous proof of the necessity of such unbounded shared randomness has been lacking; our paper provides the first concrete demonstration. On the other hand, our findings provide a systematic way to quantify simulation cost across different entangled states and different sets of compatible measurements. This reveals complexity-based advantages of quantum resources and offers a complementary operational understanding of how quantum resources can be harnessed beyond existing paradigms. For instance, implementing incompatible measurements often requires additional quantum memory \cite{Buscemi-2020a}. Although compatible measurements can always be implemented with classical memory, those with high simulation cost may still serve as valuable resources because their classical simulation requires high-dimensional—or even unbounded—classical memory \cite{zhang2024towards}. \blk

\blk

\section{Preliminary}
For a given quantum state $\rho_{AB}$, we wish to quantify the amount of classical resources needed to simulate each assemblage $\mc{A}$ for which $\rho_{AB}\mapsto\mc{A}$.  For this simulation task, it is natural to quantify the classical resource in terms of how much shared randomness Alice and Bob need to reproduce $\mc{A}$ using a local hidden state (LHS) model.  An LHS model for assemblage $\mc{A}=\{\sigma_{a|x}\}$, should it exist, is given by a shared random variable $\lambda$ with distribution $p(\lambda)>0$, a family of quantum states $\{\rho_{\lambda}\}$ for Bob, and a classical channel $p(a|x,\lambda)$ for Alice such that 
\begin{equation}  
\label{Eq:LHS}
\sigma_{a|x}=\sum_{\lambda\in \Lambda}p(a|x,\lambda)p(\lambda)\rho_{\lambda}\qquad  \sigma_{a|x}\in \mc{A},~\forall a,x.
\end{equation}
{Deploying this model requires Alice and Bob to have shared randomness of size $|\Lambda|\in[1,+\infty]$}, and we define $\gamma(\mc{A}):=\inf |\Lambda|$ as the smallest possible cardinality of $\Lambda$ for which Eq. \eqref{Eq:LHS} holds, and we refer to this as the \textit{simulation cost} of state assemblage $\mc{A}$. {For LHS models using a continuous shared variable, the sum in Eq. \eqref{Eq:LHS} is replaced with an integral, and thus the simulation cost will exist but could be unbounded.}  

If an assemblage $\mc{A}$ does not admit an LHS model, then the resource transformation  $\rho_{AB}\mapsto\mc{A}$ constitutes the celebrated effect of quantum steering \cite{Wiseman2007}, specifically in the direction from Alice to Bob. Although entanglement is a necessary condition for steering, not all entangled states are steerable.  For such states, the next natural question then is to look \textit{beyond steering} and ask how much shared randomness is needed to simulate all the assemblages $\mc{A}$ that the state can generate. For an unsteerable $\rho_{AB}$, we define \begin{equation}
    \gamma(\rho_{AB}):=\sup_{\mc{A}}\gamma(\mc{A})
\end{equation}
as the highest simulation cost to simulate any $\mc{A}$ for which $\rho_{AB}\mapsto\mc{A}$.   

To be fully comprehensive, one should also consider the simulation cost $\gamma(\mc{B})$ for any assemblage $\mc{B}$ generated from $\rho_{AB}$ by locally measuring on Bob's side since quantum steering is asymmetric between the two parties \cite{Bowles2014, Bowles2016}. However, the specific examples we study in this paper involve symmetric quantum states (under the exchange of parties).  Consequently, our results remain general by just considering assemblages generated by local measurements on Alice's side. As an interesting special case that will be discussed in the paper, when Alice and Bob hold a qubit-qubit state, we also define 
\begin{equation}
    \gamma^p(\rho_{AB}):=\sup_{\mc{A}^p}\gamma(\mc{A}^p)
\end{equation}
as the simulation cost of simulating any $\mc{A}^p$ generated by measurements that are confined to the $x$-$z$ plane, i.e., $\{M_{a|x}:=\alpha_{a|x}(\mbb{I}+\hat{n}_{a|x}\cdot\vec{\sigma})\}$ in Eq.~\eqref{Eq:channel-conversion} with Bloch vectors $\{\hat{n}_{a|x}\}$ lying in the same plane \cite{Bavaresco2017}. 
\blk

One important question we consider is whether the simulation cost $\gamma(\rho_{AB})$ satisfies some general dimensionality bound.  In fact, we always have $\gamma(\rho_{AB})\leq (\dim(A))^2(\dim(B))^2$ for every separable state $\rho_{AB}$ due to the existence of minimal separable decompositions \cite{Uhlmann-1998a}. Does this bound also hold for unsteerable entangled states? This question was first raised and left unanswered by Brunner et al. in a study that established an upper bound on the simulation cost by explicitly constructing local hidden state models \cite{Bowles2015}. 

\yujie In this work, we answer the question negatively: the simulation cost is unbounded for some unsteerable entangled states. \blk We do so by deriving lower bounds on simulation cost that have not been previously discussed. Specifically, we prove that $\gamma(\rho_{AB})$ can be unbounded by demonstrating that its lower bound diverges (tends to infinity), even for certain two-qubit states.

\yujie 
To develop the analysis, we leverage the close link between steering and measurement incompatibility \cite{Quintino2014, Uola2014, Brunner2014}. We now briefly review the notion of measurement compatibility.

A family of measurements $\{M_{a|x}\}_{a,x}$ is called compatible or jointly measurable if each measurement in the family can be simulated by a single `parent' POVM $\{\Pi_{\lambda}\}_{\lambda\in \Lambda}$ with index set $\Lambda$ as:
\begin{align}
\label{Eq:compatible_POVM}
    M_{a|x}=\sum_{\lambda\in \Lambda} p(a|x,\lambda)\Pi_{\lambda}\qquad \forall a, x
\end{align}
for some classical channel $p(a|x,\lambda)$ \cite{Heinosaari-2016a, Guerini-2017a}.  In the simple case of quantum observables, compatibility amounts to pairwise commutativity, but the latter is not necessary for general POVMs. \par 

In parallel with the simulation cost of steering assemblages and bipartite quantum states $\gamma(\rho_{AB})$, Skrzypczyk \textit{et al.} introduced the notion of \textit{compatible complexity}, which, for a family of compatible measurements $\mc{M}:=\{M_{a|x}\}$, is the smallest number of outcomes $|\Lambda|$ such that Eq. \eqref{Eq:compatible_POVM} holds \cite{Skrzypczyk2020}. We interpret this as the simulation cost of a set of compatible measurements $\mc{M}$ and denote it by $N(\mc M)$.

In this paper, we first prove that $N(\mathcal{M})$ can be unbounded for a natural class of compatible measurements—namely, the noisy spin measurements introduced below. Using the connection between measurement incompatibility and EPR steering (Propositions~\ref{prop:general CC}–\ref{prop:planar CC}), this yields an unbounded simulation cost $\gamma(\rho_{AB})$ for two-qubit Werner states and, more generally, for a broader class of entangled yet unsteerable states via reductions to Werner states.
\blk

\par 

\section{Simulation Cost of Werner States and Noisy Spin Measurements}
\begin{table}[t]
\small
    \centering
    \begin{tabular}{c|c|c}
    Singlet Weight & SR Cost $\gamma(\omega_r)$ & State type\\
        \hline
$r=0$ & $\gamma=1$ & product state \\
\hline
$0<r\leq\frac{1}{3}$     &   $\gamma=4$  &   separable \\
\hline
\multirow{2}{5em}{$\frac{1}{3}<r\leq\frac{1}{2}$}      &   $\gamma>4\;$ ;    &  entangled and \\
& $\gamma=\Omega\big((\frac{1}{2}-r)^{-\frac{2}{5}}\big)\;$  & unsteerable  \\
     \hline
     $\frac{1}{2}<r<1$ &  --- & steerable
     \end{tabular}
    \caption{Simulation cost of two-qubit Werner states $\omega_r$.}
        \label{tab: summary}
\end{table}
\yujie The bulk of our analysis focuses on:

\noindent (1):the two-qubit family of Werner states.

\begin{equation}
\label{Eq:Werner}
\omega_r=r\op{\Psi^-}{\Psi^-}+(1-r)\frac{\mbb{I}\otimes\mbb{I}}{4},
\end{equation}
where $\ket{\Psi^-}=\sqrt{1/2}(\ket{01}-\ket{10})$ is the singlet state;
\\
\noindent (2): the family of noisy spin measurements (i.e, noisy qubit projective-valued measure):
\begin{equation}
\mc{P}_r=\{\{M_{+|\hat{n}}^{(r)}, M_{-|\hat{n}}^{(r)}\}\}_{\hat{n}},\quad M_{\pm|\hat{n}}^{(r)}=\tfrac{1}{2}(\mbb{I}\pm r\hat{n}\cdot\vec{\sigma})
\label{eq:nspin}
\end{equation}
with $\hat{n}$ being a normalized Bloch vector, and the noise parameter $r\in[0,1]$ effectively shrinks the Bloch sphere down to radius $r$.  We similarly write $\mc{P}_r^p$ for the subset of all noisy spin measurements with $\hat{n}$ confined to the $x$-$z$ plane. 

The Werner states and the noisy spin measurements provide paradigmatic cases for the study of quantum steering, and measurement incompatibility, respectively\cite{Werner1989, Wiseman2007, Uola2014, Brunner2014,Heinosaari-2015a}. It is known that $\mc{P}_r$ is a compatible family if and only if $r\in [0,\frac{1}{2}]$ \cite{Wiseman2007,Heinosaari-2015a}, whereas the compatibility range for $\mc{P}_r^p$ (the planar spin measurements) is $r\in [0,\frac{2}{\pi}]$~\cite{Uola2016} (see also Appendix~\ref{appendixE} for explicit construction). Similarly, it was shown recently that $\omega_r$ is unsteerable if $r\in [0,\frac{1}{2}]$\cite{Zhang2024,Renner2024} even when one considers all POVMs, and is unsteerable when $r\in [0,\frac{2}{\pi}]$ under planar POVMs (see also Proposition~\ref{prop:planar CC}).

\begin{figure*}[t]
        \centering
\includegraphics[width=0.7\textwidth]{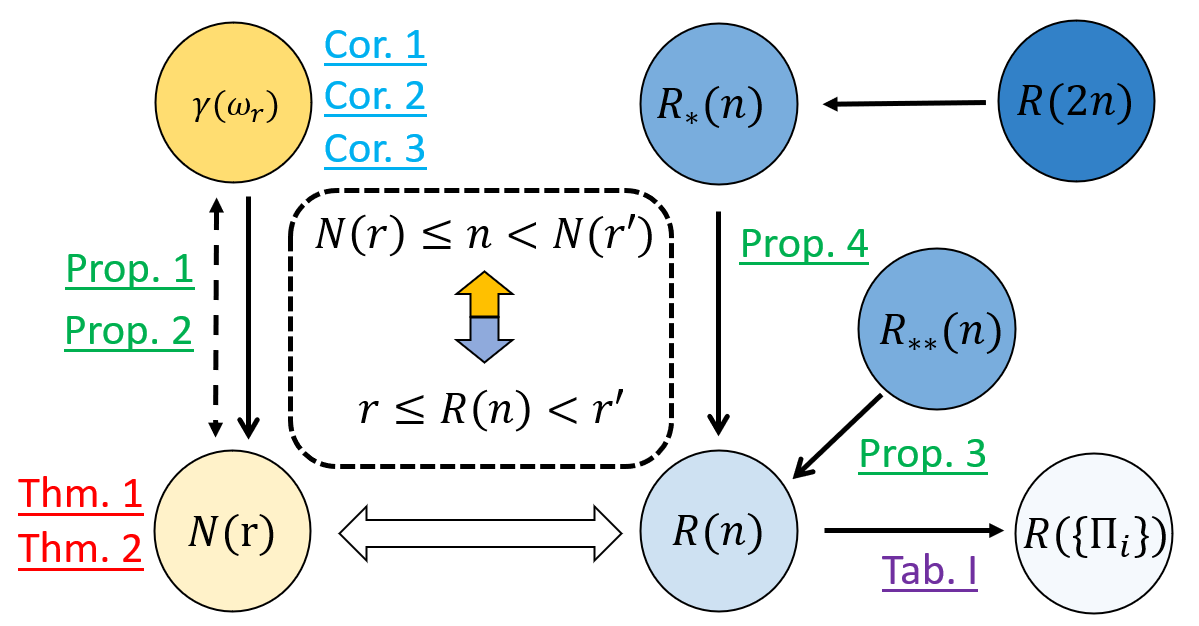}
    \caption{Schematic showing the connections between different quantities in this paper: (a) Left: Relation between simulation cost of Werner states $\gamma(\omega_r)$ and simulation cost (compatible complexity) $N(r)$ for the set of all noisy spin measurements 
    ; (b) Right: Compatible radius for noisy spin measurements $R(n)$ and its derivatives, i.e., various lower and upper bounds; (c) The left and right parts are connected by the relation in the middle box (see Eq.~\ref{Eq:compatibility-radius-complexity}); (d) $\rightarrow$(dashed): Upper bound from left to right (in special cases). }
    \label{fig:schematic}
\end{figure*}

In this paper, however, rather than asking whether two-qubit Werner state or the set of noisy spin measurements is unsteerable or compatible, we focus on their simulation cost within their corresponding parameter regimes. We write $\gamma(\omega_r)$ and $\gamma^p(\omega_r)$ for the simulation cost of the Werner state under all POVMs and under planar POVMs, respectively.  This simulation cost is a special case of those for general bipartite state $\rho_{AB}$, more formally:
\begin{definition}
The simulation cost $\gamma(\omega_r)$ is the smallest $n$ such that every assemblage $\mc{A}:=\{\sigma_{a|x}\}$ generated from $\omega_r$ by Eq.~\ref{Eq:channel-conversion} admits an LHS model with at most $n$ hidden states.
\begin{equation}
\gamma(\omega_r):=\inf\{n: \text{cost for simulating every $\mc{A}$ of $\omega_r$}\},
\end{equation}
\end{definition}
Likewise, we write $N(r)$ and $N^p(r)$ for the simulation costs (compatible complexities) of the measurement families $\mc{P}_r$ and $\mc{P}^p_r$ within their compatibility regions, this simulation cost is a special case of those for general set of measurements $\mc M=\{M_{a|x}\}$, more formally:
\begin{definition}
The simulation cost $N(r)$ of the noisy spin measurements $\mc{P}_r$ is the smallest $n$ for which there exists an $n$-outcome parent POVM $\{\Pi_\lambda\}$ such that every $\{M^{(r)}_{\pm|\hat n}\}\in\mc{P}_r$ can be simulated by Eq.~\eqref{Eq:compatible_POVM}.
\begin{equation}
N(r):=\inf\{n: \text{cost for simulating $\mc{P}_r$}\}
\end{equation}
\end{definition}

Our restriction to this simple family of states and sets of measurements may seem like a limitation.  However, on the contrary, we show that many interesting features of simulation cost can already be demonstrated with these examples, and for the family of Werner states, they are summarized in Table \ref{tab: summary}.  Furthermore, as we explain in this paper, the lower bounds we prove on $\gamma(\omega_r)$ can be translated into lower bounds on $\gamma(\rho_{AB})$ for \textit{any} unsteerable state $\rho_{AB}$.  

Another reason for focusing on Werner states and noisy spin measurements is their deep connection: the unsteerability of Werner states and the compatibility of noisy spin measurements are two sides of the same phenomenon, a link studied extensively in \cite{Uola2014, Quintino2014}. Here we extend this connection to the level of simulation cost, which partially enables bidirectional translation of results. \blk
\begin{lemma}\label{lemma1}
$\mc{P}_{r}$ can be simulated with an $n$-element POVM if and only if the simulation cost of Werner states $\omega_r$ under \textit{projective measurements} equals $n$.
$\gamma_{\rm PM}(\omega_r) = N(r)$.
\end{lemma}
\yujie where $\gamma_{\rm PM}(\omega_r)$ is the simulation cost of $\omega_r$ when only projective-valued measures are considered. \blk

The proof of Lemma~\ref{lemma1} is detailed in Appendix~\ref{appendixF}, this connection allows us to directly write the following propositions. 
\blk
\begin{proposition}
$\gamma(\omega_r) \geq N(r)$.
    \label{prop:general CC}
\end{proposition}
\noindent This inequality follows immediately from the definitions: $\gamma(\omega_r)$ is the simulation cost of the Werner state under arbitrary POVMs on Alice’s side, which include projective measurements as a special case. However, strong numerical evidence suggests that simulating all noisy POVMs is strictly more costly than simulating noisy spin PVMs. We discuss this in Appendix~\ref{appendixG} and summarize it in the following conjecture.
\begin{conjecture}
The set of noisy qubit POVMs and noisy qubit PVMs (spin measurements) have different simulation costs. 
\label{conj:POVM vs PVM}
\end{conjecture}
\noindent Thus, it appears the above inequality might not be tight in general.\par 

Nevertheless, in a special case, we further show in the appendix~\ref{appendixG} that this inequality can be saturated for planar measurements.
\begin{proposition}
    $\gamma^p(\omega_r)=N^p(r)$ for $r\in[0,\frac{2}{\pi}]$.
    \label{prop:planar CC}
\end{proposition}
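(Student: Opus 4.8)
The plan is to prove the two inequalities $\gamma^p(\omega_r)\ge N^p(r)$ and $\gamma^p(\omega_r)\le N^p(r)$ separately, both resting on the same linear dictionary between LHS models and parent-POVM decompositions that already underlies Proposition \ref{prop:general CC}. First I would record the explicit assemblage map: for a planar POVM $\{M_{a|x}\}$ with $M_{a|x}=\tfrac12(c_{a|x}\mbb{I}+\vec b_{a|x}\cdot\vec\sigma)$ and $\vec b_{a|x}$ in the $x$--$z$ plane, a direct computation using $\op{\Psi^-}{\Psi^-}=\tfrac14(\mbb{I}\otimes\mbb{I}-\sum_i\sigma_i\otimes\sigma_i)$ gives $\sigma_{a|x}=\tfrac14(c_{a|x}\mbb{I}-r\,\vec b_{a|x}\cdot\vec\sigma)$, so that $\{2\sigma_{a|x}\}_a$ is itself a planar POVM whose Bloch vectors are shrunk by the factor $r$. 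Setting $\Pi_i=2p(i)\rho_i$ turns any LHS model of size $n$ into a size-$n$ POVM $\{\Pi_i\}$ (the reduced state $\rho_B=\tfrac12\mbb{I}$ forces $\sum_i\Pi_i=\mbb{I}$) that post-processes to every such ``$r$-shrunk'' planar POVM, and conversely. When the measurements are projective this shrunk family is precisely $\mc{P}_r^p$, which is what pins the two quantities to the same object.

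The inequality $\gamma^p(\omega_r)\ge N^p(r)$ is then immediate: since planar projective measurements $\mc{P}_1^p$ form a subfamily of all planar measurements, any ensemble that simulates every planar-generated assemblage in particular simulates the $\mc{P}_1^p$-generated ones, and by the dictionary the corresponding $\{\Pi_i\}$ jointly measures $\mc{P}_r^p$, so its size is at least $N^p(r)$. For the reverse inequality I would start from an optimal parent $\{\Pi_i\}_{i=1}^{N^p(r)}$ for $\mc{P}_r^p$ and show that the same ensemble already simulates the $r$-shrunk version of every planar POVM. The natural tool is the zonotope picture flagged in the abstract: writing each effect as a point $(c,\vec e)$ in $\mbb{R}\times\mbb{R}^2$, the sub-effects reachable from the parent form the zonotope $\zon=\{\sum_i q(i)\Pi_i:q(i)\in[0,1]\}$, and the requirement that $\{\Pi_i\}$ reproduce every spin direction forces $\zon$ to contain the full radius-$r$ disk at height $c=1$ (the boundary circle is traced by the spin effects $(1,r\hat n_\theta)$, and convexity supplies the interior). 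A short check using the segments of $\zon$ joining this disk to the points $(0,\vec 0)$ and $(2,\vec 0)$ then shows that every admissible $r$-shrunk planar effect, which obeys $|\vec e|\le r\min(c,2-c)$, already lies in $\zon$.

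The step I expect to be genuinely hard is upgrading ``each effect of the target POVM lies in $\zon$'' to ``the whole POVM is a single post-processing of $\{\Pi_i\}$,'' i.e. choosing representations $E_a=\sum_i q(a|i)\Pi_i$ that share one partition of unity $\sum_a q(a|i)=1$. This is automatic only when the $\Pi_i$ are linearly independent; for $n=N^p(r)$ outcomes they are not, the representations are far from unique, and consistent simultaneous assignment becomes a real feasibility problem. I would attack it by exploiting the one-dimensional ordering of planar directions: take the parent covariant under in-plane rotations (Bloch vectors equally spaced on the circle), decompose the target POVM into binary $r$-noisy planar POVMs --- each of which is realized exactly by a single spin direction together with a two-valued relabeling, as in the two-outcome case --- and then invoke convexity and closure under post-processing of the simulable set. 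Establishing that every multi-outcome $r$-noisy planar POVM admits such a binary (or covariantly ordered) decomposition is the crux; it is precisely the planar feature that has no clean analogue in three dimensions, which is why the full-sphere problem of Proposition \ref{prop:general CC} yields only the inequality $\gamma(\omega_r)\ge N(r)$ rather than equality. Finally I would note that the whole argument lives on the compatibility range $r\in[0,\tfrac{2}{\pi}]$, outside of which $\mc{P}_r^p$ fails to be jointly measurable and both sides cease to be finite.
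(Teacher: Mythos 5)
Your setup matches the paper's: the LHS--parent-POVM dictionary (via $\Pi_i=2p(i)\rho_i$ and the shrink-by-$r$ action of $\omega_r$ on planar effects) gives $\gamma^p(\omega_r)\ge N^p(r)$ at once, and you correctly locate the entire difficulty in the converse, namely upgrading the per-effect memberships $M^r_a\in\{\sum_iq_i\Pi_i\}$ to a single post-processing with a common partition of unity $\sum_a q(a|i)=1$. But your proposed resolution of that step does not work. The lemma you would need --- that every $r$-noisy planar POVM is a convex combination of relabelings of single binary measurements drawn from $\mc{P}^p_{r'}$ ($r'\le r$) and trivial POVMs --- is false, not merely unproven. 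Take the $r$-noisy trine $\{\tfrac13(\mbb{I}+r\hat n_a\cdot\vec\sigma)\}_{a=1}^3$: matching the Bloch parts of such a decomposition, the visibility $r$ cancels from both sides, so the question reduces to whether the \emph{noiseless} trine is simulable by noiseless planar projective measurements; it is not, because the trine is extremal with three linearly independent rank-one effects, whereas any relabeling of a binary measurement has effects confined to a two-dimensional operator span. In other words, the set of POVMs simulable by a fixed parent is strictly larger than the convex-relabeling hull of the binary measurements that parent simulates, and your reduction discards exactly the targets that matter. A secondary but independent flaw: you fix the parent to be covariant under in-plane rotations, yet the paper's Eq. (16) and Table II show the rotationally symmetric planar parent is strictly suboptimal for even $n$, so a covariant-only argument proves at best $\gamma^p(\omega_r)\le N^p_{\mathrm{cov}}(r)$, which can exceed $N^p(r)$.

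The paper closes the gap with a coefficient-smoothing argument that works for an arbitrary optimal parent. Starting from per-effect decompositions $M^r_a=\sum_i p(a|i)\Pi_i$ (which exist because each $\{M^r_a,\mbb{I}-M^r_a\}$ lies in the convex hull of $\mc{P}^p_r$), the defects $q_i=\sum_a p(a|i)-1$ satisfy $\sum_i q_i\Pi_i=0$. Because planar effects span a three-dimensional real space, any such dependence can be localized to at most four effects, two with $q_i>0$ and two with $q_i<0$; one then subtracts $\beta\sum_iq_i\Pi_i$ from a suitably chosen outcome's decomposition and increases $\beta$ until either some $\sum_ap(a|i)$ reaches $1$ (reducing the number of defective effects) or some $p(a|i)$ reaches $0$ (after which a new outcome with two strictly positive coefficients can be found), and a counting argument shows the iteration terminates with $\sum_ap(a|i)=1$ for all $i$ while keeping $0\le p(a|i)\le1$. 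If you want to rescue your route, you must either replace the binary-decomposition lemma (which fails) or adopt a direct argument of this smoothing type.
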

In what follows, we begin by studying the simulation cost $N(r)$ of noisy spin measurements. The connections made in Proposition \ref{prop:general CC} and Proposition \ref{prop:planar CC} will then help us establish results for the simulation cost $\gamma(\omega_r)$ of two-qubit Werner states.  More general results on the simulation cost of arbitrary bipartite states will be discussed in the last part of the paper.

\section{Compatibility Radius and Its Geometric Characterization}
Propositions \ref{prop:general CC} and \ref{prop:planar CC} directly translate the problem of simulating $\omega_r$ into the problem of simulating noisy spin measurements $\mc{P}_r$, and accordingly the simulation cost $\gamma(\omega_r)$ to the simulation cost $N(r)$.  \yujie However,  Because $N(r)$ is an integer-valued step function of $r$, it is awkward to analyze directly. We therefore invert the viewpoint: instead of fixing the visibility $r$ and asking for the simulation cost $N(r)$ for $\mc{P}_r$. Here we fix $n$ and define $R(n)$ as the largest $r$ for which there exists an $n$-outcome POVM that can simulate $\mc{P}_r$, i.e., 
\begin{definition}
The \textit{compatibility radius} $R(n)$ is the largest visibility $r$ such that the entire family of noisy spin measurements $\mc{P}_r$ can be simulated via Eq.~\eqref{Eq:compatible_POVM} by some $n$-outcome POVM $\{\Pi_{\lambda}\}_{\lambda=1}^n$.
\label{def:cradius}
\begin{align}
&N(r)=\inf\{n: \text{cost for simulating $\mc{P}_r$}\}  \\
\Rightarrow &R(n)=\sup\{r: \text{$\mc{P}_r$ is simulable with cost $n$}\}. \notag 
\end{align}
\end{definition}
\noindent That is, 
\begin{align}
N(r):&=\inf\{n: r\le R(n)\} \\
\Rightarrow  R(n):&=\sup\{r: N(r)\le n\}, \notag
\end{align}
thus, for any $r<r'\in[0,\frac{1}{2}]$ and integer $n$, we have the relationship:
 \begin{equation}
 \label{Eq:compatibility-radius-complexity}
r \leq R(n)< r' \;\;\Leftrightarrow\;\; N(r)\leq  n < N(r').
 \end{equation}
The same applies in the planar case with $N^p(r)$ and $R^p(n)$, i.e.,  $r \leq R^p(n)< r' \;\Leftrightarrow\; N^p(r)\leq  n < N^p(r')$\footnote{The equality cannot, in general, be taken on the right-hand side because $R(n)$ is defined as a supremum and is only right-continuous by definition.}.

\par  From Propositions \ref{prop:general CC} and \ref{prop:planar CC}, bounds on $\gamma(\omega_r)$ and $\gamma^p(\omega_r)$ can therefore be obtained by studying $R(n)$ and $R^p(n)$. This inversion allows us study each discrete $n$ directly, and to characterize $R(n)$ and $R^p(n)$ both at finite $n$ and in the asymptotic regime with $n\rightarrow \infty$;  the detailed relation among all these quantitites are summarized in Fig.~\ref{fig:resource_transformation}.\blk

\begin{table}[h]
\small
    \centering
    \begin{tabular}{|c|c|}
    \hline
    \textbf{symbol} & \textbf{Explanation} \\
    \hline
    $\{\Pi_{\lambda}\}_{\lambda\in\Lambda}$     &  POVM with effects $\{\Pi_{\lambda}\}$\\
    \hline
    $\text{sym}\{\Pi_{\lambda}\}_{\lambda\in\Lambda}$     &  Symmetric extended POVM\\
    \hline
    $\Lambda$     & Index set of outcomes\\
         \hline
    $\{M_{a|x}\}_{a,x}$     & Set of Children POVMs \\
        \hline
    $\gamma(\omega_r)$     & Simulation cost of Werner states \\
    \hline
    $N(r)$     & Simulation cost of\\
      & noisy spin measurements \\ 
    \hline  
    $\mathfrak{m}_{\{ \Pi_{\lambda}\}}$     & Compatible region of $\{ \Pi_{\lambda}\} $ \\
         \hline
    $\mathfrak{m}^*_{\{ \Pi_{\lambda}\}}$     & Compatible region of $\text{sym}\{ \Pi_{\lambda}\} $ \\
    \hline    $R(\{ \Pi_{\lambda}\})$     & Compatible radius for $\{ \Pi_{\lambda}\}$\\
    \hline
    $R(n)$  & $R(\{ \Pi_{\lambda}\})$ for all $n$-outcome POVM \\
    \hline
    $R_*(n)$  & Symmetric upper bound of $R(n)$    \\
    \hline    
    $R_{**}(n)$  & Simplex upper bound of $R(n)$ \\
    \hline       
    $X^p$ & Quantity for the planar case\\
    \hline
    \end{tabular}
    \caption{Notation used throughout the paper}
    \label{tab:my_label}
\end{table}

We note that $R(n)$ and $R^p(n)$ are also of great interest on their own. Previously, studies on $R(n)$ and $R^p(n)$ have only been implicitly made in \cite{Bowles2015, Hirsch2016, Cavalcanti2016} by constructing finite-share-randomness LHS or LHV models with fixed outcome number $n$. Instead of optimizing over all $n$-outcome POVMs, these Such constructions typically use symmetric parent POVMs $n$‐outcome POVMs $\{\Pi_\lambda\}$, which yield only lower bounds on $R(n)$ for specific finite $n$ and offering neither extension to arbitrary $n$ nor meaningful upper bounds. In contrast, here we present a thorough investigation of $R(n)$, deriving both lower and upper bounds in the finite and asymptotic regimes by exploring different geometric characterizations and related inequalities.\par

\yujie 
\par  To study $R(n)$, we first fix a parent POVM $\{\Pi_{\lambda}\}$ and define its \textit{compatibility radius} as the largest visibility $r$ for which it simulates all noisy spin measurements $\mathcal P_r$:
\begin{equation}
\label{eq:com-radius}
    R(\{\Pi_{\lambda}\})=\sup\bigl\{r:\mc{P}_r\text{ is simulable by }\{\Pi_{\lambda}\}\bigr\}.
\end{equation}
The $n$-outcome compatible radius $R(n)$ in definition~\ref{def:cradius} is then obtained by optimizing over all $n$-outcomes POVMs: 
\begin{equation}
    R(n)=\sup_{\{ \Pi_{\lambda}\}}R(\{ \Pi_{\lambda}\}).
\end{equation}
\par 

We now characterize $R(\{\Pi_{\lambda}\})$ and $R(n)$ in qubit case, where the geometry is especially useful.

In qubit case, each POVM effect can be written in the Pauli basis as:
\begin{equation}
 \Pi_{\lambda}=\alpha_{\lambda} (\mbb{I}+\eta_{\lambda} \hat{n}_{\lambda}\cdot\vec{\sigma}),
\label{eq:Pauli-basis}
\end{equation}
with $0\le \eta_{\lambda} ,\alpha_{\lambda} \leq 1$, and normalization constraints $\sum_{\lambda}\alpha_{\lambda} =1$ and $\sum_{\lambda}\alpha_{\lambda} \eta_{\lambda} \hat{n}_{\lambda}=\vec{0}$\footnote{ In the appendix~\ref{appendixC}, we further show that it is sufficient to consider $\eta_{\lambda} =1$ when optimizing over all $R(\{\Pi_{{\lambda}}\})$}. 

\yujie Then, a noisy spin measurement $\{M_{\pm|\hat{n}}^{(r)}=\frac{1}{2}(\mbb{I}\pm r\hat{n}\cdot\vec{\sigma})\}$ is simulated by the parent POVM $\{\Pi_{\lambda}=\alpha_{\lambda} (\mbb{I}+\eta_{\lambda} \hat{n}_{\lambda}\cdot\vec{\sigma})\}$ precisely when there exist post-processing coefficients $q_{\lambda}\in[0,1]$ such that one of its effect $ M_{+|\hat{n}}^{(r)}=\sum_{\lambda} q_{\lambda}\Pi_{\lambda}$. This yields the natural geometric conditions:
\begin{subequations}
\label{eq:geoview}
    \begin{align}
 r\hat{n}&=2\sum_{\lambda} q_{\lambda}\alpha_{\lambda}\eta_{\lambda}\hat{n}_{\lambda}  \label{eq:geoview1}\\
\frac{1}{2}&=\sum_{\lambda} q_{\lambda}\alpha_{\lambda}\label{eq:geoview2}
\end{align}
\end{subequations}
These relations underlie our geometric characterizations of $R(\{\Pi_{\lambda}\})$, specifically, we will have: 
\begin{align}
    R(\{ \Pi_{\lambda}\}) :&= \sup \bigl\{r: B(0,r) \subseteq \mf{m}_{\{ \Pi_{\lambda}\}} \bigr\}\notag \\
    &=\text{inr}(\mf{m}_{\{ \Pi_{\lambda}\}}), \label{eq: R-geo}
\end{align}
where $B(0,r) = \{x : \|x\| \le r\}$, $\text{inr}(S)$ denotes inradius, and
\begin{align}
    \mf{m}_{\{ \Pi_{\lambda}\}}=\{2\textstyle\sum_{\lambda}q_{\lambda}\alpha_{\lambda} \eta_{\lambda}\hat{n}_{\lambda}\;&\big|\; 0\le q_{\lambda}\le 1, \notag \\
    &\;\sum_{\lambda}q_{\lambda}\alpha_{\lambda} =\tfrac{1}{2}\}. 
    \label{eq:c-zonotope}
\end{align}
is a convex `constrained zonotope'\footnote{As discussed in Appendix~\ref{appendixA}, the set $\mf{m}_{\{\Pi{\lambda}\}}$ is precisely the region of Bloch vectors corresponding to all unbiased two-outcome measurements simulable by $\{\Pi_{\lambda}\}$ (here “unbiased’’ means $\tr[M_1]=\tr[M_2]$ for a measurement ${M_1,M_2}$). Thereforefore,  we will sometimes simply refer to $\mf{m}_{\{\Pi{\lambda}\}}$ as the compatibility region. } obtained by intersecting the zonotope generated by $\{2\alpha_{\lambda} \eta_{\lambda}\hat{n}_{\lambda}\}$ \cite{McMullen1971, ziegler2012}, namely:
\begin{equation}
   \mf{m}^*_{\{ \Pi_{\lambda}\}}=\left\{2\textstyle\sum_{\lambda}q_{\lambda}\alpha_{\lambda} \eta_{\lambda}\hat{n}_{\lambda}\;\big|\; 0\le q_{\lambda}\le 1\right\} 
\label{eq:outerzonotope}
\end{equation}
with the affine hyperplane $\sum_{\lambda}q_{\lambda}\alpha_\lambda=\frac{1}{2}$ from Eq.~\eqref{eq:geoview2}. Here the set $ \mf{m}^*_{\{ \Pi_{\lambda}\}}$ is an important outer approximation of $\mf{m}_{\{ \Pi_{\lambda}\}}$, which coincides with the ``constrained zonotope" associated with the $2n$-outcome POVM $\textbf{sym}\{ \Pi_{\lambda}\}_{\lambda}$:
\begin{equation}
  \textbf{sym}\{ \Pi_{\lambda}\}_{\lambda}:=\{\tfrac{\alpha_{\lambda} }{2}(\mbb{I}+\eta_{\lambda}\hat{n}_{\lambda}\cdot\vec{\sigma}), \tfrac{\alpha_{\lambda} }{2}(\mbb{I}-\eta_{\lambda}\hat{n}_{\lambda}\cdot\vec{\sigma})\}_{\lambda}.   \notag 
\end{equation} 
 as discussed in the appendix~\ref{appendixA}. 

By definition, $R(n)$ is obtained by taking the supremum $\inr(\mf{m}_{\{ \Pi_{\lambda}\}})$. But optimizing this quantity directly is challenging. In contrast, the outer approximation $\mf{m}^*_{\{ \Pi_{\lambda}\}}$-a zonotope- admits a simpler characterization and is related to $R(n)$ via the following chain of inequalities:
\begin{align}
\label{Eq:radius-n-ineq}
    R(n)&=\sup_{\{ \Pi_{\lambda}\}_{i=1}^n}\inr(\mf{m}_{\{ \Pi_{\lambda}\}})\\ & \leq\sup_{\{ \Pi_{\lambda}\}_{i=1}^n}\inr(\mf{m}^*_{\{ \Pi_{\lambda}\}})=:R_*(n)\leq R(2n). \notag
\end{align}
\yujie The first inequality holds because $\mf{m}^*_{\{ \Pi_{\lambda}\}}$ is an outer approximation of $\mf{m}_{\{ \Pi_{\lambda}\}}$,  while the last inequality follows since $R_*(n)$ is the greatest compatible radius achievable by symmetric $2n$-outcome measurement, which form a strict subset of all $2n$-outcome measurement. 

In what follows, many of our results hinge on the characterization of the inradius of the zonotope $\inr(\mf{m}^*_{\{ \Pi_{\lambda}\}})$. 
These analysis yield new upper bounds on $R(n)$ via $R(n)\le R_{*}(n)$.
Moreover, since $R_{*}(n)$ lower-bounds $R(2n)$, the same analysis provides lower bounds on $R(2n)$, in particular, it allows us to obtain results in the asymptotic regime with $n\to\infty$, which have never been explored before.

\section{Lower Bound on Simulation Cost}  
\yujie In the previous section, we introduced the compatibility radius $R(n)$ as a dual perspective on simulation cost $N(r)$ for noisy spin measurements $\mc{P}_r$. And proposition~\ref{prop:general CC} allows us to connect lower bound of $N(r)$ with lower bound on $\gamma(\omega_r)$. We now show that these links allow us to obtain lower bounds on simulation costs from upper bounds on  $R(n)$ and to reinterpret the problem geometrically in terms of various geometric inequalities. 
\blk 

We begin with the simplest cases $n=3$ and $n=4$, where tight upper bounds follow from elementary geometric inequalities.
\begin{proposition}
\label{Prop:Main}
 $R^p(3)=\tfrac{1}{2}$ 
 and $R(4)=\tfrac{1}{3}$.
\end{proposition}
\begin{proof}
The proof relies on the well-known circumradius–inradius inequality, which states that the inradius of an $n$-simplex (e.g., a triangle is a 2-simplex) is at least $n$ times smaller than its circumradius \cite{Murray1979}. 
\par 
To apply this, consider any three-outcome POVM $\{\Pi_{\lambda}\}_{\lambda=1}^3$. Observe that the set $\mf{m}_{\{ \Pi_{\lambda}\}}$ is contained within the $2$-simplex:
\begin{align}
    \mf{m}_{\{ \Pi_{\lambda}\}}\subseteq \mf{m}^{**}_{\{ \Pi_{\lambda}\}}=\{\sum_{\lambda=1}^3p_{\lambda}\hat{n}_{\lambda}|\sum_{\lambda=1}^3 p_{\lambda}=1, p_{\lambda}\ge 0\}
\end{align}
where $\mf{m}^{**}_{\{ \Pi_{\lambda}\}}$ is simply obtained from Eq.~\ref{eq:c-zonotope} by setting $p_{\lambda}=2q_{\lambda}\eta_{\lambda}\alpha_{\lambda}$ and dropping the constraints $q_{\lambda}\le 1$.\blk),

Since $\{\hat{n}_{\lambda}\}$ are unit vectors,  the circumradius of $\mf{m}^{**}_{\{ \Pi_{\lambda}\}}$ is at most one. \yujie The circumradius–inradius inequality \cite{Murray1979} (outlined in Appendix~\ref{appendixC}) then implies: \blk
\begin{align}
    R^p(3)\leq \sup_{\{ \Pi_{\lambda}\}_{i=1}^3}\!\inr(\mf{m}^{**}_{\{ \Pi_{\lambda}\}})=: R^p_{**}(3)\leq \frac{1}{2}.\notag
\end{align}
An analogous argument for four-outcome POVMs (\yujie corresponds to a 3-simplex \blk) on the full Bloch sphere follows directly, which implies that $R(4)\leq \frac{1}{3}$. \par 

Combining these with the lower bounds for $R(n)$ and $R^p(n)$ in the next section (see Table \ref{tab: symmetric POVM}) establishes the proposition. 
\end{proof}

\begin{corollary}
For any $r>\frac{1}{3}$, we have $\gamma(\omega_r)> 4$. Thus, the simulation cost of any entangled two-qubit Werner state (with $r>\frac{1}{3}$\cite{Werner1989}) is strictly greater than that of any separable two-qubit Werner state.
\label{Cor: Main}
\end{corollary}
Unfortunately, the outer approximation $R_{**}(n)$ is too loose for arbitrary $n$. In the following, we seek tighter outer approximations of $\mathfrak{m}_{\{\Pi_{\lambda}\}}$ and better geometric inequalities.

\yujie We begin again with the planar case. Using an isoperimetric inequality for polygons, we obtain a sharper upper bound on $R^p(n)$ \blk:
\begin{proposition}
For $n$-outcome planar measurement $R^p(n)\le R_*^p(n)=\frac{1}{n}\cot (\frac{\pi}{2n})$. 
\end{proposition}
\begin{proof}
Let $\{ \Pi_{\lambda}\}_{{\lambda}=1}^n$ be an arbitrary $n$-outcome planar POVM.  According to Eq.~\eqref{Eq:radius-n-ineq}, its compatible radius will be upper bounded by the inscribed radius \yujie of $\msf{m}^*_{\{ \Pi_{\lambda}\}}$ in Eq.~\eqref{eq:outerzonotope}\blk, i.e.,
\begin{equation}
  R^p(n) =\sup_{\{ \Pi_{\lambda}\}}\textbf{inr}({\msf{m}^*_{\{ \Pi_{\lambda}\}}})\le  \sup_{\{2\alpha_{\lambda} \hat{n}_{\lambda}\}}\frac{2A}{L}, 
\end{equation}
where  $\msf{m}^*_{\{ \Pi_{\lambda}}\}$ is a $2n$-sided zonogon generated by $\{2\alpha_{\lambda}\eta_{\lambda}\hat{n}_{\lambda}\}$, with area $A$ and perimeter $L=4\sum_{\lambda}\eta_{\lambda}\alpha_{\lambda}\le 4$ (note that the polygon consists of $n$ pairs of sides with length equals to its generators). Here the last inequalities follows simply by triangulating the convex polygon from its incenter, which relates the area $A$, perimeter $L$ and inradius of the polygon. 

\par The isoperimetric inequality of a such a $2n$-sided polygon \cite{blaasjo2005} stipulates that 
\begin{equation}
    A\le \frac{L^2}{8n\tan(\frac{\pi}{2n})}.
\end{equation}
Therefore, we have the upper bound
\begin{equation}
R^p(n)\le \frac{1}{n}\cot (\frac{\pi}{2n})<\frac{2}{\pi}
\end{equation} 
\end{proof}
\noindent Expanding $x\cot x$ about $x=0$ using the Taylor expansion, we obtain 
\begin{align}
R^p(n)\le \frac{1}{n}\cot (\frac{\pi}{2n})<\frac{2}{\pi}\left(1-\frac{1}{3}(\frac{\pi}{2n})^2\right)    
\end{align}
Combining this bound with Eq. \eqref{Eq:compatibility-radius-complexity} and Proposition \ref{prop:planar CC} then yields Corollary \ref{cor:lower planer}.
\begin{corollary}
\label{cor:lower planer}
For any $r\in[0,\frac{2}{\pi}]$, 
\begin{equation}
\gamma^p(\omega_r)=N^p(r)> \sqrt{\frac{\pi}{6}}\left(\frac{2}{\pi}-r\right)^{-\frac{1}{2}}.
\end{equation}
\end{corollary}

\yujie We now move beyond the planar case to the full Bloch sphere. Here we rely on stronger isoperimetric inequalities for zonotopes. \blk The culminating result is stated in the following theorem and discussed in detail in the appendix~\ref{appendixC}. 
\yujie
\begin{theorem} 
\label{thm:SC lowerbound}
For any $r\in[0,\frac{1}{2}]$, 
\begin{align}
&R(n)< \frac{1}{2}-cn^{-\frac{5}{2}}\\
&\gamma(\omega_r)\ge N(r) >c'\left(\frac{1}{2}-r\right)^{-\frac{2}{5}}
\end{align}
for some positive constant $c$ and $c'=c^{\frac{2}{5}}$.
\end{theorem}
\blk
\noindent Both Corollary \ref{cor:lower planer} and Theorem \ref{thm:SC lowerbound} establish our primary claim that the simulation cost of an unsteerable entangled state diverges at $r=\tfrac{1}{2}$ (resp. $\frac{2}{\pi}$) when considering general measurements (resp. planar measurements). \yujie We also note that the constants above trace back to those appearing in the original zonotope isoperimetric inequalities \cite{Bourgain1988, Bourgain1993}; see Appendix~\ref{appendixC} for details. \blk


\section{Upper bound on the simulation cost} 
\yujie We now turn to upper bounds on the simulation cost of unsteerable Werner states $\gamma(\omega_r)$ and noisy spin measurements $N(r)$.  We note, however, due to the inequality $\gamma(\omega_r)\ge N(r)$ in propositions \ref{prop:general CC}, upper bounds on $N(r)$ do not translate into upper bounds on $\gamma(\omega_r)$ (except the planar case).  Nevertheless, the simulation cost results $N(r)$ on noisy spin measurements remain valid on its own.\blk

As before, we use Eq. \eqref{Eq:compatibility-radius-complexity} and turn the problem of upper bounding $N(r)$ into lower bounding $R(n)$.  The latter is further lower bounded by $R(\{ \Pi_{\lambda}\})$ for any fixed choice of POVM $\{ \Pi_{\lambda}\}$.  The same reasoning holds for the planar case.

For qubits, the compatible radius  $R(\{\Pi_{\lambda}\})$ (with $\{\Pi_{\lambda}\}$ written in Pauli basis as in Eq.~\eqref{eq:Pauli-basis})  can actually be efficiently computed using the following optimization-based criterion:

\begin{proposition}
For a given POVM ${\Pi_{\lambda}}$, the compatible radius $R(\{\Pi_{\lambda}\})$ is given by: \begin{equation} R(\{\Pi_{\lambda}\})= \inf_{\substack{|\vec{c}|=1\\ -1\le c_0\le 1}}\sum_{\lambda}|\alpha_{\lambda} (c_0+\eta_{\lambda} \vec{c}\cdot\hat{n}_{\lambda})|. \label{eq:optPi} \end{equation} 
\label{prop:criteria}
\noindent Thus, when taking the supremum over all $n$-outcome POVMs ${\Pi_{\lambda}}$, the compatibility radius $R(n)$ can be expressed as: \begin{equation} R(n)=\sup_{\substack{{\{\Pi_{\lambda}}}\}} \inf_{\substack{|\vec{c}|=1\\-1\le c_0\le 1}}\sum_{\lambda}|\alpha_{\lambda} (c_0+\eta_{\lambda} \vec{c}\cdot\hat{n}_{\lambda})|, \label{eq:optR(n)}\end{equation} 
\end{proposition}
\yujie 
However, in general, the optimization problem is difficult to solve, and the POVMs that maximize the compatibility radius are typically asymmetric (see Table~\ref{tab: symmetric POVM} for a brief overview). Nevertheless, the following corollary provides a useful simplification:
\begin{corollary}
    $R(n)$ is maximized by rank-1 POVM $\{\Pi_n\}$, i.e., $\eta_{\lambda} =1$ for $\lambda\in [n]$. 
\label{coro:criteria}
\end{corollary}

Proposition~\ref{prop:criteria} and Corollary~\ref{coro:criteria} are proved in detail in Appendix~\ref{appendixD}; here we summarize its implications.  Equation~\eqref{eq:optPi} shows that for any qubit POVM $\Pi_\lambda=\alpha_\lambda(\mbb{I}+\eta_\lambda\hat n_\lambda\cdot\vec\sigma)\}$, we can explicitly compute its compatible radius $R(\{\Pi_\lambda\})$ via a concrete minimization that depends only on the parameters $\alpha_{\lambda},\eta_{\lambda}$ and $\hat{n}_{\lambda}$ of $\{\Pi_\lambda\}$. Intuitively, this minimization arises by recasting the problem of finding  $R(\{\Pi_\lambda\})$ into a inclusion problem in Eq~\eqref{eq: R-geo}.    Equation~\eqref{eq:optR(n)} then takes the supremum of these individual compatible radius over all $n$-outcome POVMs.  \blk

\par 

\subsection{Qubit planar measurements}
In general, to obtain the exact value of $R(n)$, one must optimize over all $n$-outcome measurements $\{\Pi_{\lambda}\}_{\lambda=1}^n$. However, some well-chosen $n$-outcome measurements can typically provide a relatively tight lower bound on $R(n)$. \par

Intuitively, POVMs $\{\Pi_{\lambda}\}_{\lambda=1}^n$ with a large compatible radius are those that are "symmetric". For example, consider qubit planar measurements. We first examine the rotationally symmetric planar POVM $\{\Pi^{\rot}_{\lambda}=\frac{1}{n}(\mbb{I}+\hat{n}_{\lambda}\cdot\vec{\sigma})\}_{\lambda=1}^n$, where $\hat{n}_{\lambda}=\left(\cos(\frac{2\pi \lambda}{n}),0,\sin(\frac{2\pi \lambda}{n})\right)^T$ \cite{Bavaresco2017}. In the appendix~\ref{appendixE}, we analytically determine their compatibility radius to be
\begin{equation}
R^p(\{\Pi^{\rot}_{\lambda}\})=\begin{cases}
\frac{1}{n}\cot(\frac{\pi}{2n})\cos(\frac{\pi}{2n})~~&\text{if $n$ is odd}\\
\frac{2}{n}\cot(\frac{\pi}{n}) &\text{if $n$ is even}
\end{cases}.
\label{eq: planar sp}
\end{equation} 
Similar to the proof of Corollary \ref{cor:lower planer}, \yujie one can check the asymptotic limit of the above compatibility radius via Taylor expansion 
\begin{align}
R^p(\{\Pi^{\rot}_{\lambda}\})=\begin{cases}
\frac{2}{\pi}-\frac{5\pi}{12n^2}+O(\frac{1}{n^4})~~&\text{if $n$ is odd}\\
\frac{2}{\pi}-\frac{2\pi}{3n^2}+O(\frac{1}{n^4})) &\text{if $n$ is even}.
\end{cases}
\end{align} 
Since $\frac{1}{n}\cot(\frac{\pi}{2n})\cos(\frac{\pi}{2n})$ is much tighter comparing to $\frac{2}{n}\cot(\frac{\pi}{n})$ in the asymptotical region, we will always lower bound $R^p(n)$ using the odd $n$ strategy (with a ``$n-1$'' shift to obtain lower bounds for both the even and odd cases.):
\begin{align}
R^p(n)\ge R^p(\{\Pi^{\rot}_{\lambda}\})>\frac{2}{\pi}-\frac{5\pi}{12(n-1)^2}, \label{eq: planar-poly-R}
\end{align}

Combing Eq.~\eqref{eq: planar-poly-R} with Eq. \eqref{Eq:compatibility-radius-complexity} and Proposition~\ref{prop:planar CC}, we can hence obtain the following upper bound of $N^p(r)$ and $\gamma^p(\omega_r)$: 
\begin{corollary}
For any $r\in[0,\frac{2}{\pi}]$, 
\begin{equation}
\gamma^p(\omega_r)=N^p(r)\leq \sqrt{\frac{5\pi}{12}}\left(\frac{2}{\pi}-r\right)^{-\frac{1}{2}}+1,
\end{equation}
\label{cor:upper planer}
\end{corollary}
\blk
\par 
A noteworthy and somewhat counterintuitive features of $R^p(\{\Pi^{\rot}_{\lambda}\})$ is its non-monotonic in $n$, which implies that symmetric POVMs do not always have the largest compatibility radius (see also Table~\ref{tab: symmetric POVM}), contrary to common intuition \cite{Nguyen2019, Werner1989}, where symmetric strategy were typically used to construct hidden variable models.

Nevertheless, a comparison of Corollaries \ref{cor:lower planer} and \ref{cor:upper planer} demonstrates that $\{\Pi^{\rot}_{\lambda}\}_{{\lambda}=1}^n$ is \yujie asymptotically \blk optimal for simulating noisy spin measurements in the $x$-$z$ plane.
\begin{table}[t]
    \centering
    \begin{tabular}{ccccc}
    \hline
    \hline
         &  \multicolumn{2}{c}{ Planar $R^p(n)$}  & \multicolumn{2}{c}{ General $R(n)$}  \\
         \hline
     n    & Symmetric &  Numerics & Thomson  & Numerics\\
     \hline
     3     &   0.5   &  0.5 &  0  & 0 \\
     4    &   0.5   &   0.5274  &  0.3333&  0.3333 \\
     5    &   0.5854    & 0.5854&  0.3464 &  0.3718 \\
     6    &   0.5774    &  0.5927 & 0.3333 & 0.4004\\
     7    &   0.6102    &  0.6102  & 0.2857& 0.4060\\
     8    &   0.6035    & 0.6111  & 0.4392&    \\
     9    &   0.6206     & 0.6206  & 0.4446&\\  
     10   &   0.6155     &  0.6213 & 0.4376&\\
     11   &   0.6259     & 0.6259       &--&\\
     12   &   0.6220    &  0.6265    & 0.4588&\\
     \hline
     \hline
     \end{tabular}
    \caption{Lower bounds on the compatible radii $R^p(n)$ and $R(n)$.  For the planar case, the first column describes the largest radius achievable by symmetric POVMs, as given by Eq. \eqref{eq: planar sp}.  For the general case, the third column provides a lower bound on $R(n)$ using solutions to the Thomson problem, while tighter numerical bounds are obtained by random sampling the parent POVM $\{ \Pi_{\lambda}\} $\cite{Yujie2023}.}
        \label{tab: symmetric POVM}
\end{table}

\subsection{General qubit measurements}
\par
Next, we turn to simulating all noisy spin measurements. In this general scenario, the rotationally symmetric planar POVM $\{\Pi^{\text{rot}}_\lambda\}$ cannot be directly generalized, as noisy spin measurements are no longer represented by a shrunken 2D disk but rather a 3D shrunken ball in the Bloch sphere representation. To address this, we consider alternative symmetric constructions for general qubit measurements, based on the idea of "evenly distributing" points on a sphere.

First, we utilize the equally-spaced points obtained from the Thompson problem \cite{Wales2006}. The numerical results for $n\in\{4,\cdots, 12\}$ are shown in Table~\ref{tab: symmetric POVM}. Second, we construct a parent POVM by leveraging the symmetries of Platonic solids. A detailed analysis of this approach is provided in the appendix~\ref{appendixE}, with results summarized in Table~\ref{tab:platonic}.
 \begin{table}[h]
    \centering
    \begin{tabular}{c|c|c}
    \hline
    \hline
$n$ & POVM $\{ \Pi_{\lambda}\}$ &  $R(\{ \Pi_{\lambda}\})$\\
\hline
4 &  tetrahedron & $\frac{1}{3}$\\
\hline
6 &   Octahedron &  $\frac{1}{3}$\\
\hline
8 &   Cube & $\frac{\sqrt{6}}{6}$\\
\hline
12 &  Icosahedron & $\frac{\phi^3\sqrt{1+(1-\phi)^2}}{3({1+\phi^2})}$\\ 
\hline
 20 &  Dodecahedron &  $\sqrt{\frac{5}{6}}\frac{\phi^2}{5}$\\
 \hline
 \hline
    \end{tabular}
    \caption{Parent POVM with platonic configuration and their \yujie associated compatible radius $R(\{ \Pi_{\lambda}\})$ \blk, where $\phi=\frac{1+\sqrt{5}}{2}$ is the golden ratio.}
    \label{tab:platonic}
\end{table}
 
The final and most powerful construction leverages results from Refs. \cite{Bourgain1988, siegel2023} on zonotope approximations. These works discuss how to achieve the best approximation of a unit ball using a zonotope generated by a finite number of generators. This is particularly relevant because the compatible region $\msf{m}_{\{\Pi_{\lambda}\}}^*$ mentioned earlier in the preliminaries is, in fact, a zonotope. \par 

A detailed discussion on zonotopes is provided in the appendix~\ref{appendixC}, and the key result is summarized in the following.
\yujie 
\begin{theorem}
\label{thm: zonotope approx}
For any $r\in[0,\frac{1}{2}]$,
\begin{align}
\label{Eq:upb-full}
R(2n)&\ge \frac{1}{2}-Cn^{-\frac{5}{4}}.\\
N(r)&\leq C'\left(\frac{1}{2}-r\right)^{-\frac{4}{5}}
\end{align}
for some positive constants $C$ and $C'=2C^{\frac{4}{5}}$ originated from the study of zonotope approximation problem \cite{Bourgain1988, siegel2023}, and will be briefly discussed in Appendix~\ref{appendixC}.
\end{theorem}
\blk 
However, this does not directly provide an upper bound for the simulation cost of $\gamma(\omega_r)$, as Proposition~\ref{prop:general CC} is not an equality.
\ref{thm: zonotope approx}

\section{Simulating Cost of General Unsteerable States} 
Corollary \ref{Cor: Main} shows that the separability threshold for two-qubit Werner states coincides with a jump in shared randomness cost from $4$ to $5$ (i.e., 2 bits to $\log_25$ bits).  In fact, this connection between separability and simulation cost holds for all two-qubit states $\rho_{AB}$. That is, $\gamma(\rho_{AB})>4$ if and only if $\rho_{AB}$ is entangled.  As explained below, this previously unrecognized result actually follows directly from two different results in quantum information theory. First, it happens that for two-qubit separable states, the number of product states needed in a separable decomposition is always at most $4$ \cite{Sanpera1998, Wootters1998}. Second, Jevtic \textit{et al.} have provided a `nested tetrahedron' condition \cite{Sania2014}, which, when combined with the `four-packable' condition by Nguyen and Vu \cite{Nguyen2016a}, says the following.

\yujie
\begin{lemma}[\cite{Sania2014}] 
A two-qubit state $\rho_{AB}$ is separable if and only if its steering ellipsoid $\mc{E}_B$ fits inside a tetrahedron that fits inside the Bloch sphere
\label{lem: nest tetrahedron}
\end{lemma}
\noindent To be more precise, the steering ellipsoid $\mc{E}_B$ for a bipartite state $\rho_{AB}$ is defined by first transforming $\rho_{AB}$ to $\tilde{\rho}_{AB}\propto [(\rho_A)^{-1/2}\otimes \mbb{I}]\rho_{AB}[(\rho_A)^{-1/2}\otimes \mbb{I}]$ with $\rho_{A}=\tr_B[\rho_{AB}]$ (assumes $\rho_{A}$ has full-rank, otherwise, it is just a product state). The possible unnormalized states that it can be steered to using the measurement effects $M_{\pm|\hat{n}}=\frac{1}{2}(\mbb{I}\pm \hat{n}\cdot \vec{\sigma} )$ as in Eq.~\ref{Eq:channel-conversion} are given as 

\begin{align}
&\tilde{\rho}_{AB}=\begin{pmatrix}
   1 &  \vec{\tilde{b}} \\
   \vec{0} & \tilde{T}  
   \end{pmatrix}
   \xRightarrow[]{M_{\pm|\hat{n}}}\sigma_{\pm|\hat{n}}=\frac{1}{4}\begin{pmatrix}
      1\\
      \vec{\tilde{b}}\pm \tilde{T}\hat{n}
   \end{pmatrix} \notag, \\
\end{align}
where we have written down both $\tilde{\rho}_{AB}$ the unnormalized state  $\sigma_{\pm|\hat{n}}$ as four-dimensional vectors in Pauli basis $\{\mbb{I},\sigma_x,\sigma_y,\sigma_z\}$, and the steering ellipsoid is then defined as the set
\begin{equation}
     \mc{E}_B=\{ \vec{\tilde{b}}+\tilde{T}\hat{n}\}_{\hat{n}}
\end{equation}
\begin{proposition}
\yujie A two-qubit state $\rho_{AB}$ is entangled if and only if $\gamma(\rho_{AB})>4$. 
\end{proposition}
\begin{proof}
The ``only if'' part follows directly from the existence of minimal separable decompositions for two-qubit state \cite{Sanpera1998, Wootters1998}, which is no more than 4.  \par  

The ``if'' part can also be simply proved by contradiction.  Assume $\rho_{AB}$ is entangled (thus must have full-rank marginal) but has a simulation cost less than or equal to $4$, then the filtered state $\tilde{\rho}_{AB}\propto [(\rho_A)^{-1/2}\otimes \mbb{I}]\rho_{AB}[(\rho_A)^{-1/2}\otimes \mbb{I}]$ must also be entangled, and with simulation cost less than or equal to $4$. This is because the full-rank local filter on Alice is invertible and completely positive; it preserves separability/entanglement and induces a bijection between LHS models.

Therefore, there exists a set of local hidden states $\{\rho_{\lambda}=\frac{1}{2}(\mbb{I}+\vec{n}_{\lambda}\cdot\vec{\sigma})\}_{\lambda=1}^4$ such that for measurement effects $M_{\pm|\hat{n}}=\frac{1}{2}(\mbb{I}\pm \hat{n}\cdot \vec{\sigma} )$ acting on $\tilde{\rho}_{AB}$, the unnormalized state $\sigma_{\pm|\hat{n}}$
can be simulated by
\begin{align}
&\sigma_{\pm|\hat{n}}=\sum_{\lambda=1}^4p(\pm|\hat{n},\lambda)p(\lambda)\rho_{\lambda} \notag \\
\Rightarrow &\mc{E}_B=\{ \vec{\tilde{b}}+\tilde{T}\hat{n}\}_{\hat{n}}\subseteq \text{Conv}[\vec{n}_{\lambda}]_{\lambda=1}^4.
\end{align}
This essentially shows that there exists a tetrahedron defined by the Bloch vectors $\{\vec{n}_{\lambda}\}_{\lambda=1}^4$ of  $\{\rho_{\lambda}\}_{\lambda=1}^4$ that contain the ellipsoid $\mc{E}_B$. Thus, $\tilde{\rho}_{AB}$ is separable from Lemma ~\ref{lem: nest tetrahedron}, and there is a contradiction. \par 
Thus, we conclude that a two-qubit $\rho_{AB}$ is entangled if and only if $\gamma(\rho_{AB})>4$.
\end{proof}
\begin{remark}
For two qubits the same threshold holds symmetrically if one considers steering Alice by measuring Bob; choosing which side is measured does not affect the conclusion.
\end{remark}
\blk

\yujie {We now discuss ways in which our results can be applied to unsteerable states beyond two-qubit Werner family.  Observe that any unsteerable state $\rho_{AB}$ will satisfy $\gamma(\rho)\geq \gamma(\mc{L}(\rho))/m$, i.e., $\log_2(\gamma(\rho))\geq \log_2(\gamma(\mc{L}(\rho)))-\log_2m $ under any map of the form $\mc{L}=\sum_{i=1}^m\mc{N}_i\otimes\mc{E}_i$ ({one-way LOCC map from Bob to Alice}); where the $\mc{N}_i$ are completely positive and trace-preserving (CPTP) maps while the $\mc{E}_i$ are completely positive {and $\sum_{i}\mc{E}_i$ being trace-preserving.  Indeed, suppose that $\tr_A[M_{a|x}\otimes\mbb{I}\rho_{AB}]=\sum_{\lambda=1}^tp(a|x,\{M_{a|x}\},\lambda)p(\lambda)\rho_\lambda$ is a LHS model for $\rho_{AB}$, with $\mc{M}=\{M_{a|x}\}$ being an arbitrary family of measurements.  Then for any CP map $\mc{E}$ on Bob's system, we also have $\tr_A[(M_{a|x}\otimes\mbb{I})\mc{E}_B(\rho_{AB})]=\sum_{\lambda=1}^tp(a|x,\{M_{a|x}\},\lambda)p(\lambda)\mc{E}(\rho_\lambda)$, and so
\begin{align}
    \tr_A&[(M_{a|x}\otimes\mbb{I})\mc{L}(\rho_{AB})]\notag\\
    &=\sum_{i=1}^m\tr_A[(M_{a|x}\otimes\mbb{I})\mc{N}_i\otimes \mc{E}_i(\rho_{AB})]\notag\\
    &=\sum_{i=1}^m\tr_A[(\mc{N}^\dagger_i(M_{a|x})\otimes\mbb{I})\text{id}\otimes \mc{E}_i(\rho_{AB})]\notag\\
    &=\sum_{i=1}^m\sum_{\lambda=1}^t p(a|x,\{\mc{N}_i^\dagger(M_{a|x})\},\lambda)p(\lambda)\mc{E}_i(\rho_\lambda)\notag\\
     &=\sum_{i=1}^m\sum_{\lambda=1}^t q(a|x,\{M_{a|x}\},\lambda,i)q(\lambda,i)\sigma_{\lambda,i}
     \label{Eq:LHS-one-wayLOCC}
\end{align}
where $q(a|x,\{M_{a|x}\},\lambda,i)=p(a|x,\{\mc{N}_i^\dagger(M_{a|x})\},\lambda)$,  $q(\lambda,i)=p(\lambda)\tr[\mc{E}_i(\rho_\lambda)]$, and $\sigma_{\lambda_i}=\mc{E}_i(\rho_\lambda)/\tr[\mc{E}_i(\rho_\lambda)]$. Hence, we can lower bound the simulation cost of $\rho_{AB}$ by first transforming it to a two-qubit state using the map above}. 

Secondly, it is well-known that every bipartite state can be converted into a Werner state when Alice and Bob simultaneously apply one of twelve random Clifford gates to their system (a so-called ``twirling map'') \cite{DiVincenzo-2002a}.  Moreover, the noise parameter $r$ in the resulting state is given by the singlet fraction, i.e. $\rho\mapsto \rho_W(r)$ with $r=\frac{4\bra{\Phi^-}\rho\ket{\Phi^-}-1}{3}$.  Since the twirling map will increase the shared randomness in an LHS model by a factor of twelve, we conclude that any unsteerable state $\rho_{AB}$ with $\bra{\Phi^-}\mc{L}(\rho)\ket{\Phi^-}>\frac{3{R}(n)+1}{4}$ will require an LHS model with at least $\log_2 n-\log_2(12m)$ bits of shared randomness{, by the same reasoning used in Eq. \eqref{Eq:LHS-one-wayLOCC}.} \yujie 
Otherwise, one can use $\log_2(12m)$ bit share randomness to turn the state into a two-qubit Werner state and find a better way to simulate two-qubit Werner state.  From Theorem \ref{thm:SC lowerbound}, this amount grows unbounded as $\bra{\Phi^-}\mc{L}(\rho)\ket{\Phi^-}\to \frac{5}{8}$, which corresponds to the threshold $r=\frac{1}{2}$ for Werner states.  \blk

\textit{Conclusions -- }  In this work we studied the shared randomness cost $\gamma(\rho_{AB})$ for simulating c-to-cq channels built using $\rho_{AB}$. We characterized the simulation cost and showed that it can be unbounded. The mathematical methods we developed to study this problem used the correspondence between steerability and measurement incompatibility and their similar geometric picture.  Analogous to any other type of channel simulation cost, the quantity $\gamma(\rho_{AB})$ provides one measure of operational resourcefulness for every unsteerable state $\rho$.  Furthermore, understanding the simulation cost of different assemblages can be used for semi-device-independent entanglement verification \cite{Cavalcanti-2017a} when Alice and Bob are known to have a limited amount of shared randomness, even in noisy environments where quantum steering is not possible.

The results presented here pertain to LHS models.  As future work, it would be interesting if similar techniques can be applied to bound the shared randomness needed for local hidden variable (LHV) models.  Since LHS models are more demanding than LHV models, the upper bounds presented in this work will still apply.

Finally, we close by conjecturing a more fundamental role for the simulation cost $\gamma(\rho_{AB})$ in the study of nonlocality and entanglement.  The notion of steerability provides a clean partitioning in the set of bipartite quantum states between those that are steerable versus those that are not.  However, this partition does not coincide with the separation between entangled and separable states \cite{Barrett-2002a, Wiseman2007}.  In contrast, we have found here that the for two-qubit states the jump from separable to entangled-unsteerable coincides with a jump in the simulation cost.  Hence, by moving beyond steering and focusing on simulation cost, we recover the separable/entangled boundary.  Perhaps such a relationship holds for higher-dimensional bipartite states as well.

\textit{Acknowledgements—} This work was supported by NSF Award No. 1839177. We thank Virginia Lorenz and Marius Junge for helpful discussions during the preparation of this manuscript. \yujie We also thank the anonymous reviewer for carefully reading the manuscript and for pointing out errors and typos in earlier versions. \blk

\bibliographystyle{quantum}
\bibliography{complexity}
\onecolumngrid

\newpage
\begin{appendices}
\section*{Appendix}
\renewcommand{\thesubsection}{\arabic{subsection}}
\addcontentsline{toc}{section}{Supplementary Material}
  \startcontents
  \printcontents{}{1}{}
\vspace{0.5cm}
Here we provide some technical details that complements the main manuscript.\\
In section~\ref{appendixA} we establish a detailed geometrical connection between our problem and the analysis of zonotopes, a special type of convex polytope.
\\
In section~\ref{appendixB}, we prove prop.~\ref{Prop:Main} and cor.~\ref{Cor: Main} of the main text related to the first geometric inequality - the circumradius–inradius inequality of simplex. \\
In section~\ref{appendixD}, we connect our problem to the zonotope approximation problems and thm~\ref{thm:SC lowerbound} and thm~\ref{thm: zonotope approx} are proven with connection to geometric inequality therein, which allows us to study simulation cost in the asymptotic region (with cost $n\rightarrow \infty$).
\\
In section~\ref{appendixC}, we provide an optimization-based criterion for the computation of the compatibility radius $R(\{ \Pi_{\lambda}\})$ for a given $\{ \Pi_{\lambda}\}$, which can be used as lower bound for compatible radius $R(n)$, and thus for obtaining upper bound for $N(r)$. 
\\
In section~\ref{appendixE}, we give special examples of symmetric POVMs and calculate their compatibility region $\mathfrak{m}_{\{ \Pi_{\lambda}\}}$ and compatibility radius $R({\{ \Pi_{\lambda}\}})$ explicitly.   
\\In section~\ref{appendixF}, we establish the connection shared randomness cost in unsteerable state and compatible measurements. 
\\
Finally, in section~\ref{appendixG}, we provide evidence and conjecture that the simulation cost of noisy spin measurements (PVMs) might be strictly smaller than the simulation cost of noisy POVMs. 
\section{Geometry of compatible region}. 
\label{appendixA}
\begin{definition}
A zonotope is a set of points in $d$-dimensional space constructed from vectors $\{\vec{v}_{\lambda}\}$ by taking the Minkowski sum of line segments: $$\mathcal{Z}=\left
\{\sum_{\lambda} x_{\lambda}\vec{v}_{\lambda}|0\le x_{\lambda}\le 1\right\},$$ 
where the set of vectors $\{\vec{v}_i\}$ is called the generator of the zonotope.
\end{definition}
\par With the above notation, we introduce the zonotope defined by $\{\Pi_{\lambda}\}$ 
\begin{equation}
\mc{M}_{\{ \Pi_{\lambda}\}}:=\left\{\sum_{\lambda}q_{\lambda} \Pi_{\lambda}\bigg| 0\le q_{\lambda}\le 1\right\},
\end{equation}
which is the collection of all effects that can be simulated by $\{ \Pi_{\lambda}\}$ (or in other words, representing all dichotomic measurement that can be simulated by $\{ \Pi_{\lambda}\}$ ). This region can be viewed as a zonotope in an $d^2$-dimensional space of bounded operators \cite{McMullen1971, ziegler2012}, where $d$ is the dimension of the Hilbert space on which the measurement effects act. \par 

For qubit measurements, we can simplify the problem by parametrizing a quantum measurement $\{\Pi_{\lambda}\}$ in the Pauli basis, which provides a natural way to represent each effect $ \Pi_{\lambda}=\alpha_{\lambda}  (\mbb{I}+\eta_{\lambda}\hat{n}_{\lambda}\cdot\vec{\sigma})$ as a 4-dimensional vector:
\begin{equation}
\vec{\pi}_{\lambda}=(\alpha_{\lambda} ,\alpha_{\lambda}\eta_{\lambda} \hat{n}_{\lambda})^T=(\alpha_{\lambda} ,\alpha_{\lambda} \eta_{\lambda}\hat{n}_{\lambda}^x,\alpha_{\lambda} \eta_{\lambda}\hat{n}_{\lambda}^y,\alpha_{\lambda}\eta_{\lambda}\hat{n}_{\lambda}^z)^T.
\end{equation}
The normalization and positivity constraints become $\sum_{\lambda}\vec{\pi}_{\lambda}=(1,0,0,0)$ and $\alpha_{\lambda} \ge 0$. Thus, we can geometrically visualize the compatible region for any given measurement in a 4-dimensional Euclidean space:

\begin{equation}
    \mathfrak{M}_{\{ \Pi_{\lambda}\}}:=\left\{2\sum_{\lambda}q_{\lambda}\vec{\pi}_{\lambda}\bigg| 0\le q_{\lambda}\le 1\right\},
\end{equation}
where the factor of 2 is introduced for convenience.\par 

Compared to the 4-dimensional compatible region, there are two 3-dimensional subsets of it that play a crucial role in the study of the simulation cost of noisy spin measurements:
$\mathfrak{m}_{\{ \Pi_{\lambda}\}}$ and $\mathfrak{m}_{\{ \Pi_{\lambda}\}}^*$ defined in the main text. Both subsets can be derived from this 4-dimensional zonotope and are explained as follows:
\begin{definition}
\label{pro:constrained}
    The \textit{constrained zonotope} $\mathfrak{m}_{\{ \Pi_{\lambda}\}}:=\left\{2\sum_{\lambda}q_{\lambda}\alpha_{\lambda}\eta_{\lambda}\hat{n}_{\lambda}\bigg| 0\le q_{\lambda}\le 1, \sum_{\lambda} q_{\lambda}\alpha_{\lambda} = \frac{1}{2}\right\}$ can be the three-dimensional cross-section of $\mathfrak{M}_{\{ \Pi_{\lambda}\}}$ on the plane defined by $\sum_{\lambda} q_{\lambda}\alpha_{\lambda} = \frac{1}{2}$. However, in general, it need not be a zonotope\cite{scott2016}. The set  $\mathfrak{m}_{\{ \Pi_{\lambda}\}}$ represents the Bloch vectors of all unbiased two-outcome measurements (with each effect having the same trace) that can be simulated by $\{ \Pi_{\lambda}\}$.
\end{definition}
\begin{definition}
\label{pro:projected}
The \textit{projected zonotope}  $\mathfrak{m}^{*}_{\{ \Pi_{\lambda}\}}:=\left\{2\sum_{\lambda}q_{\lambda}\mbb{P}\vec{\pi}_{\lambda}\bigg| 0\le q_{\lambda}\le 1\right\}=\left\{2\sum_{\lambda}q_{\lambda}\alpha_{\lambda}\eta_{\lambda}\hat{n}_{\lambda}\bigg| 0\le q_{\lambda}\le 1\right\}$ is a zonotope, where $\mbb{P}$ projects vector $(x,\vec{y})^T\in\mbb{R}^4$ onto $(\vec{y})^T\in\mbb{R}^3$. By definition, $\mathfrak{m}_{\{ \Pi_{\lambda}\}}\subseteq \mathfrak{m}^{*}_{\{ \Pi_{\lambda}\}}$ with equality holding if and only if $\{\mbb{P}\vec{\pi}_{\lambda}\}=\{(\alpha_{\lambda} \eta_{\lambda}\hat{n}_{\lambda})^T\}$ is centrally symmetric.  Centrally symmetric measurements are the POVMs $\textbf{sym}\{ \Pi_{\lambda}\}$ introduced in the main text.
\end{definition}
The projected zonotope is meaningful both geometrically and analytically:
\begin{itemize}
    \item Geometrically, given any arbitrary POVM $\{ \Pi_{\lambda}\}$, we can symmetrically extend it to a new POVM $\textbf{sym}\{ \Pi_{\lambda}\}=\{\frac{ \Pi_{\lambda}}{2},\tr[\Pi_{\lambda}]\mbb{I}-\frac{ \Pi_{\lambda}}{2})\}$.  The projected zonotope for the original POVM is actually the constrained zonotope of the new symmetric-extended POVM $\textbf{sym}\{ \Pi_{\lambda}\}$.
    \item Analytically, when computing the compatibility radius $R(\{ \Pi_{\lambda}\})= \inf_{\substack{|\vec{c}|=1, -1\le c_0\le 1}}\sum_{\lambda}|\alpha_{\lambda} (c_0+\vec{c}\cdot\eta_{\lambda}\hat{n}_{\lambda})|$ for the constrained zonotope (see Appendix \ref{appendixC}), an upper bound is given by the compatible radius for the projected zonotope: $R(\{ \Pi_{\lambda}\})\le R_*(\{ \Pi_{\lambda}\}):= \inf_{\substack{|\vec{c}|=1}}\sum_{\lambda}|\alpha_{\lambda} (\vec{c}\cdot\eta_{\lambda}\hat{n}_{\lambda})|$, which appears to be easier to characterize in many cases.  
\end{itemize}
\yujie
To see  $\mathfrak{m}_{\textbf{sym}\{ \Pi_{\lambda}\}}=\mathfrak{m}^{*}_{\{ \Pi_{\lambda}\}}$, we note that,   $\mathfrak{m}_{\textbf{sym}\{ \Pi_{\lambda}\}\}}$ for POVM $\mathbf{sym}\{\Pi_\lambda\}
=\Bigl\{\tfrac{\alpha_\lambda}{2}\bigl(\mbb{I}\pm \eta_\lambda \hat n_\lambda\!\cdot\!\vec\sigma\bigr)\Bigr\}_\lambda$ is defined as:
\begin{align}
    \mathfrak{m}_{\textbf{sym}\{ \Pi_{\lambda}\}}:=&\left\{\sum_{\lambda,t\in\{\pm\}}tq_{\lambda,t}{\alpha_{\lambda}}\eta_{\lambda}\hat{n}_{\lambda}\bigg| 0\le q_{\lambda,t}\le 1, \sum_{\lambda,t\in\{\pm\}}q_{\lambda,t}\frac{\alpha_{\lambda}}{2} = \frac{1}{2}\right\} \notag \\
    =&\left\{\sum_{\lambda}(q_{\lambda,+}-q_{\lambda,-}){\alpha_{\lambda}}\eta_{\lambda}\hat{n}_{\lambda}\bigg| 0\le q_{\lambda,t}\le 1, \sum_{\lambda}\frac{(q_{\lambda,+}+q_{\lambda,-})}{2}\alpha_{\lambda} = \frac{1}{2}\right\}
\end{align}
\par 
In order to show that $\mathfrak{m}_{\textbf{sym}\{ \Pi_{\lambda}\}}\supseteq \mathfrak{m}^{*}_{\{ \Pi_{\lambda}\}}$. Given any $q_{\lambda}\in[0,1]$, we can choose $
q_{\lambda,\pm}:=\frac{1\pm (2q_{\lambda}-1)}{2}\in[0,1]$, which ensures that $\sum_{\lambda}\tfrac{\alpha_{\lambda}}{2}(q_{\lambda,+}+q_{\lambda,-})=\tfrac12$ and
\begin{equation}
\sum_{\lambda}(q_{\lambda,+}-q_{\lambda,-})\,\alpha_{\lambda}\eta_{\lambda}\hat n_{\lambda}
=\sum_{\lambda}(2q_{\lambda}-1)\,\alpha_{\lambda}\eta_{\lambda}\hat n_{\lambda}
=2\sum_{\lambda}q_{\lambda}\,\alpha_{\lambda}\eta_{\lambda}\hat n_{\lambda},
\end{equation} 
where in the last equality, we use the fact that $\sum_{\lambda}\alpha_{\lambda}\eta_{\lambda}\hat n_{\lambda}=\vec{0}$. Hence every point of $\mathfrak{m}^{*}_{\{\Pi_{\lambda}\}}$ lies in $\mathfrak{m}_{\mathbf{sym}\{\Pi_{\lambda}\}}$.
\par 
Similarly, to show that $\mathfrak{m}_{\textbf{sym}\{ \Pi_{\lambda}\}}\subseteq \mathfrak{m}^{*}_{\{ \Pi_{\lambda}\}}$.
Given any feasible $q_{\lambda,\pm}\in[0,1]$ with $\sum_{\lambda}\tfrac{\alpha_{\lambda}}{2}(q_{\lambda,+}+q_{\lambda,-})=\tfrac12$, we can define
\begin{equation}
q_{\lambda}:=\frac{1+(q_{\lambda,+}-q_{\lambda,-})}{2}\in[0,1].
\end{equation}
Then
\begin{equation}
\sum_{\lambda}(q_{\lambda,+}-q_{\lambda,-})\,\alpha_{\lambda}\eta_{\lambda}\hat n_{\lambda}
=\sum_{\lambda}(2q_{\lambda}-1)\,\alpha_{\lambda}\eta_{\lambda}\hat n_{\lambda}
=2\sum_{\lambda}q_{\lambda}\,\alpha_{\lambda}\eta_{\lambda}\hat n_{\lambda},
\end{equation}
again using $\sum_{\lambda}\alpha_{\lambda}\eta_{\lambda}\hat n_{\lambda}=\vec 0$. Thus every point of $\mathfrak{m}_{\mathbf{sym}\{\Pi_{\lambda}\}}$ lies in $\mathfrak{m}^{*}_{\{\Pi_{\lambda}\}}$.
\blk
\section{Proof of proposition~\ref{Prop:Main}, and Corollary~\ref{Cor: Main}}
\label{appendixB}
\yujie Here we sketch the proof of the circumradius–inradius inequality for an $n$-simplex, following Murray \cite{Murray1979}. As noted in the remark below, the result for $n=2,3$ can also be viewed as a direct consequence of the other identities/inequalities collected there.
\blk
\begin{lemma}[The circumradius-inradius inequality\cite{Murray1979}] The inradius $r$ of an arbitrary $n$-simplex is at least $n$ times less than it circumradius $R$.  The inequality saturates when the $n$-simplex is regular.
\label{lem:nsimplex}
\end{lemma}
\begin{proof}
Let $A_i$ and $F_i$ (for $i = 1,2, .. ., n + 1$) denote, respectively, the vertices and its opposite $(n - 1)$-dimensional faces of an $n$-dimensional simplex of volume $V$. Also, let $h_i$ and $\eta_{\lambda} $ denote the distances from $A_i$ and circumcenter $O$ to the $F_i$, respectively. \par
Then $R+\eta_{\lambda} \ge h_i$. Moreover, the volume of the $n$-dimensional simplex is given by $h_iF_i/n$, and so by evaluating the volume of the simplex in three different ways, we get:
$$nV=h_iF_i=\sum \eta_{\lambda} F_i=r\sum F_i.$$
Therefore,
$$\sum(R+\eta_{\lambda} )F_i=(R+r)\sum F_i\ge \sum h_iF_i=(n+1)r\sum F_i\rightarrow R\ge nr$$
\end{proof}
\yujie
\begin{remark}
For $n=2$, the Euler's theorem in geometry states that, the inradius 
$r$, the circumradius $R$ and the distance $d$ of the centers of the circumcicle and inscribed cicle are related by:
$$0\le d^2=R(R-2r)\Rightarrow R\ge 2r$$
For $n=3$, the Grace–Danielsson inequality states that, the inradius 
$r$, the circumradius $R$ and the distance $d$ of the centers of the circumcicle and inscribed cicle are related by:
$$0\le d^2\le (R+r)(R-3r)\Rightarrow R\ge 3r$$
\end{remark}
\blk

\setcounter{proposition}{2}
\begin{proposition}
\label{Prop:Main}
 $R^p(3)=\tfrac{1}{2}$ 
 and $R(4)=\tfrac{1}{3}$.
\end{proposition}
\begin{proof}
To apply Lemma~\ref{lem:nsimplex}, we begin with the planar case.  Given any three-outcome POVM $\{ \Pi_{\lambda}\}_{i=1}^3$, observe that the set $\mf{m}_{\{ \Pi_{\lambda}\}}$ is contained in the $2$-simplex $\mf{m}^{**}_{\{ \Pi_{\lambda}\}}=\{2\sum_{i=1}^3p_{\lambda}\hat{n}_{\lambda}\;|\;\sum_{\lambda=1}^3 p_{\lambda}=1, p_{\lambda}\ge 0\}$ in two dimensions. To see that, one can just replace $p_{\lambda}=2q_{\lambda}\alpha_{\lambda}$ in the constrained zonotope, and the constrained zonotope has more constrain on $0\le q_{\lambda}\ge 1$. The 2-simplex has circumradius one because each $\hat n_i$ is a unit vector. By the circumradius–inradius inequality,
\begin{align}
    R^p(3)\leq \sup_{\{ \Pi_{\lambda}\}_{i=1}^3}\!\inr(\mf{m}^{**}_{\{ \Pi_{\lambda}\}})=: R_{**}^p(3)\leq \frac{1}{2}.\notag
\end{align}
The same reasoning for four-outcome POVMs on the Bloch sphere gives $R(4)\leq \frac{1}{3}$, where now the compatible region $\mf{m}_{\{ \Pi_{\lambda}\}}$ and three-simplex $\mf{m}^{**}_{\{ \Pi_{\lambda}\}}$ are in $\mbb{R}^3$. 
Combined with the lower bounds from Table I in the main text, we have the stated equalities.   
\end{proof}
\setcounter{corollary}{0}
\begin{corollary}
For any $r>\tfrac{1}{3}$, we have $\gamma(\omega_r)> 4$. The simulation cost of any entangled Werner state is strictly greater than that of a separable Werner state.  
\label{Cor: Main}
\end{corollary}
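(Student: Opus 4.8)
\emph{Proof plan.} The plan is to obtain both assertions as short deductions from Proposition~\ref{Prop:Main}, using the reductions already in place; the genuine work sits in establishing $R(4)=\tfrac13$, which I take as given.

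First I would convert the statement $R(4)=\tfrac13$ about the compatibility radius into a statement about the compatibility complexity $N(r)$. From the definitions of $N(r)$ and $R(n)$ (equivalently, from Eq.~\eqref{Eq:compatibility-radius-complexity}), an $n$-outcome POVM simulates $\mc{P}_r$ exactly when $r\le R(n)$, so $N(r)\le n \Leftrightarrow r\le R(n)$. Taking the contrapositive and specializing to $n=4$ with $R(4)=\tfrac13$, every $r>\tfrac13$ obeys $N(r)>4$. Since $N(r)$ is integer-valued, this sharpens to $N(r)\ge 5$.

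Next I would feed this into Proposition~\ref{prop:general CC}, which supplies $\gamma(\omega_r)\ge N(r)$ on the unsteerable range $r\in[0,\tfrac12]$. Hence for $r\in(\tfrac13,\tfrac12]$ we get $\gamma(\omega_r)\ge N(r)\ge 5>4$. For the remaining entangled states $r\in(\tfrac12,1)$, which are steerable, some generated assemblage admits no finite LHS model, so $\gamma(\omega_r)=+\infty>4$ under the natural extension of the definition. Together these cover all $r>\tfrac13$ and prove the first assertion.

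For the comparison with separable states I would recall that the two-qubit Werner state is separable precisely for $r\le\tfrac13$, and that on this range $\gamma(\omega_r)\le 4$ (the value $4$ is recorded in Table~\ref{tab: summary} and is realized by an explicit four-element LHS construction). Since every entangled Werner state has $r>\tfrac13$ and therefore $\gamma(\omega_r)>4$ by the first part, while every separable Werner state has $\gamma\le 4$, the strict separation follows. The only delicate point, and the reason the conclusion is the strict $\gamma>4$ rather than merely $\gamma\ge 4$, is the boundary value $r=\tfrac13=R(4)$: integrality of $N(r)$ promotes the non-strict $r>R(4)$ into $N(r)\ge 5$, which is exactly what forces strictness. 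No step beyond Proposition~\ref{Prop:Main} presents a real obstacle.
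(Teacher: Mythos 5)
Your proposal is correct and follows essentially the same route as the paper: it deduces $N(r)>4$ for $r>\tfrac13$ from $R(4)=\tfrac13$ via the equivalence in Eq.~\eqref{Eq:compatibility-radius-complexity}, feeds this into Proposition~\ref{prop:general CC} to get $\gamma(\omega_r)>4$ on the unsteerable range, and compares against the four-element LHS model available for separable Werner states. The only cosmetic difference is that you invoke integrality of $N(r)$ to get $N(r)\ge 5$, whereas the strict inequality $\gamma>4$ already follows directly from $N(r)>4$; both readings are fine.
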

\setcounter{proposition}{3}
\section{Proof of Theorem~\ref{thm:SC lowerbound}, Theorem~\ref{thm: zonotope approx} and connection to Zonotope approximation}
\label{appendixD}
In this section, we present the technical details showing how the well-studied results on approximating Euclidean balls with zonotopes apply to our problem.
\begin{lemma}
\label{prop:Bourgain1989}
  \yujie [proposition 6.6 \cite{Bourgain1989}] \blk For any zonotope generated by $2n$ line segments $\{\pm\alpha_{\lambda} \hat{n}_{\lambda}\}$ with $\sum_{\lambda}\alpha_{\lambda} =1$, There exists a positive constant $c_d$ depending only on dimension $d$ such that:
    \begin{equation}
    \norm{\sum_{\lambda}\alpha_{\lambda} |\langle\hat{n}_{\lambda},\hat{x}\rangle|-\beta_d}_{L^2(\sigma_d)}\ge c_dn^{-\frac{d+2}{2(d-1)}},
\end{equation}
where $\beta_3=\frac{1}{2}$, $\beta_2=\frac{2}{\pi}$ and $\norm{f}_{L^2(\sigma_{d})}=\sqrt{\int_{S_{d-1}}|f(\hat{x})|^2d\sigma_{d}(\hat{x})}$ is the $L^2$ norm on the integral of function over the $d-1$ dimensional unit sphere $S_{d-1}$. $\sigma_{d}(\hat{x})$ is the noramlized rotation invariant measure on $S_{d-1}$ with 
\begin{equation}
  \beta_d :=\int_{S_{d-1}}  |\langle\hat{n},\hat{x}\rangle| d\sigma_{d}(\hat{x})\quad\quad \forall~\hat{n}. \label{eq:defbeta}
\end{equation}
\end{lemma}
The proof is given in detail in \cite{Bourgain1989}  based on inequalities obtained from spherical harmonic expansion of these quantities. 
\setcounter{theorem}{0}
\begin{theorem}
For any $n$-outcome POVM $\{ \Pi_{\lambda}\}$, the compatible radius is upper bounded by
\begin{align}
R(n)\le \frac{1}{2}-\frac{1}{2}c_3^2n^{-\frac{5}{2}}&\Rightarrow \gamma(\omega_r)\ge c'_3|\frac{1}{2}-r|^{-\frac{2}{5}}\notag\\
R^p(n)\le \frac{2}{\pi}-\frac{1}{2}c_2^2n^{-4}&\Rightarrow \gamma^p(\omega_r)\ge c'_2|\frac{2}{\pi}-r|^{-\frac{1}{4}}
\end{align}
\yujie for some positive constant $c_d':=(\frac{1}{2}c^2_d)^{\frac{d-1}{d+2}}$ \blk
\end{theorem}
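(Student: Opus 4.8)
The plan is to bound $R(n)$ from above by turning the inradius of the compatible region into a support-function minimization on the sphere, and then to feed this into the zonotope-approximation lower bound of Proposition \ref{prop:Bourgain1989}. First I would recall from Eq. \eqref{Eq:radius-n-ineq} that $R(n)\le R_*(n)=\max_{\{\Pi_i\}}\inr(\mf{m}^*_{\{\Pi_i\}})$, so it suffices to upper bound $\inr(\mf{m}^*_{\{\Pi_i\}})$ uniformly over all $n$-outcome POVMs. Since $\mf{m}^*_{\{\Pi_i\}}$ is the zonotope generated by $\{\pm\alpha_i\hat{n}_i\}$ and is centered at the origin (by the completion constraint $\sum_i\alpha_i\hat{n}_i=\vec{0}$), its support function in a unit direction $\hat{c}$ is exactly $f(\hat{c})=\sum_i\alpha_i|\langle\hat{n}_i,\hat{c}\rangle|$, so that $\inr(\mf{m}^*_{\{\Pi_i\}})=\min_{|\hat{c}|=1}f(\hat{c})$, which is precisely the quantity $R_*(\{\Pi_i\})$ from the Remark.

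Second, I would record two elementary facts about $f$ on $S_{d-1}$: its average equals $\beta_d$ (because $\int_{S_{d-1}}|\langle\hat{n},\hat{c}\rangle|\,dS=\beta_d$ for every fixed unit $\hat{n}$, with $\beta_3=\tfrac12$ and $\beta_2=\tfrac{2}{\pi}$, together with $\sum_i\alpha_i=1$), and $0\le f\le1$ pointwise. Writing $m=\min_{|\hat{c}|=1}f$ and $g=f-m\ge0$, the variance identity gives $\norm{f-\beta_d}^2_{L^2(S_{d-1})}=\int g^2\,dS-(\beta_d-m)^2\le\int g^2\,dS\le(\max g)\int g\,dS\le(\beta_d-m)$, where the final steps use $\max g\le\max f\le1$ and $\int g\,dS=\beta_d-m$. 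Rearranging yields the key inequality $\beta_d-m\ge\norm{f-\beta_d}^2_{L^2(S_{d-1})}$.

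Third, I would invoke Proposition \ref{prop:Bourgain1989}, which lower bounds $\norm{f-\beta_d}_{L^2(S_{d-1})}\ge c_d\,n^{-\frac{d+2}{2(d-1)}}$. Squaring and combining with the previous step gives $m\le\beta_d-c_d^2\,n^{-\frac{d+2}{d-1}}$ for every POVM, hence the same bound for $R_*(n)$ and therefore for $R(n)$. Specializing $d=3$ (where $\frac{d+2}{d-1}=\frac52$) and $d=2$ (where $\frac{d+2}{d-1}=4$) produces $R(n)\le\frac12-c_3 n^{-5/2}$ and $R^p(n)\le\frac{2}{\pi}-c_2 n^{-4}$, the constants being the squares of the Bourgain constants. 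The stated lower bounds on $\gamma$ then follow by setting $n=N(r)$ (resp. $N^p(r)$) and using the duality $r\le R(n)\Leftrightarrow N(r)\le n$ of Eq. \eqref{Eq:compatibility-radius-complexity}: from $r\le R(N(r))\le\frac12-c_3\,N(r)^{-5/2}$ one solves for $N(r)\ge c_3'\,(\frac12-r)^{-2/5}$, and Propositions \ref{prop:general CC} and \ref{prop:planar CC} finally convert these into $\gamma(\omega_r)\ge N(r)$ and $\gamma^p(\omega_r)=N^p(r)$.

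The heart of the argument, and the only genuinely hard input, is Proposition \ref{prop:Bourgain1989} itself---that a zonotope with few generators cannot approximate the Euclidean ball too well in $L^2$---which relies on the spherical-harmonic analysis of \cite{Bourgain1989}. Within the present reduction the main point to get right is the passage from the $L^2$ deviation to the pointwise minimum: the bound $\max g\le1$ (equivalently $\max f\le1$, since each $|\langle\hat{n}_i,\hat{c}\rangle|\le1$ and $\sum_i\alpha_i=1$) is exactly what forces $\int g^2\le\int g$ and prevents a large maximum of $f$ from decoupling the variance from the minimum. I would also double-check that the $L^2$ norm in Proposition \ref{prop:Bourgain1989} uses the normalized (probability) surface measure, so that the sphere-average of $f$ is exactly $\beta_d$; under any other normalization only the constants $c_d$ change, while the exponents $-\tfrac52$ and $-4$ are unaffected.
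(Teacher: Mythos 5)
Your proposal is correct and follows essentially the same route as the paper's proof: reduce $R(n)$ to the inradius of the symmetrized zonotope, i.e.\ to $\min_{\hat{c}}\sum_i\alpha_i|\langle\hat{n}_i,\hat{c}\rangle|$, convert Bourgain's $L^2$ lower bound (Proposition~\ref{prop:Bourgain1989}) into a bound on that minimum using $0\le f\le 1$ and $\int f=\beta_d$, and then invert via Eq.~\eqref{Eq:compatibility-radius-complexity} and Propositions~\ref{prop:general CC} and~\ref{prop:planar CC}. The only (immaterial) difference is the elementary passage from the $L^2$ deviation to the pointwise minimum: you use the variance identity together with $\int g^2\le(\max g)\int g$, whereas the paper uses H\"older's inequality $\norm{\cdot}_2^2\le\norm{\cdot}_1\norm{\cdot}_\infty$ plus the equal mass of the positive and negative parts of $f-\beta_d$; both give $\beta_d-\min f\ge c\,\norm{f-\beta_d}_{L^2}^2$ and hence identical exponents.
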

\begin{proof}
For any $n$-outcome POVM $\{ \Pi_{\lambda}\}$, we have a chain of inequality given as:
\begin{equation}
R(n)=\sup_{\{ \Pi_{\lambda}\}}R({\{ \Pi_{\lambda}\}})\le \sup_{\{ \Pi_{\lambda}\}}R({\text{\textbf{sym}}\{ \Pi_{\lambda}\}})=\sup_{\{2\alpha_{\lambda} \hat{n}_{\lambda}\}}\text{inr}(\mathfrak{m}^{*}_{\{ \Pi_{\lambda}\}})=\sup_{\{2\alpha_{\lambda} \hat{n}_{\lambda}\}}\inf_{\hat{x}}\sum_{\lambda}\alpha_{\lambda} |\langle\hat{n}_{\lambda},\hat{x}\rangle|:= R_*(n)
\end{equation}
(see Eq. \eqref{eq:symradius} for the derivation of the last equality).  From Lemma~\ref{prop:Bourgain1989}, we have a lower bound on 
$\norm{\sum_{\lambda}\alpha_{\lambda} |\langle\hat{n}_{\lambda},\hat{x}\rangle|-\beta_d}_{L^2(\sigma_{d})}$. Using Holder's inequality and the fact that $\norm{\sum_{\lambda}\alpha_{\lambda} |\langle\hat{n}_{\lambda},\hat{x}\rangle|-\beta_d}_{L^{\infty}(\sigma_{d})}\le 1$ (\yujie we note this is a quite loose relaxation, but is enough to establish our claim. \blk),
\begin{align}
\norm{\sum_{\lambda}\alpha_{\lambda} |\langle\hat{n}_{\lambda},\hat{x}\rangle|-\beta_d}_{L^2(\sigma_{d})}^2&\le \norm{\sum_{\lambda}\alpha_{\lambda} |\langle\hat{n}_{\lambda},\hat{x}\rangle|-\beta_d}_{L^1(\sigma_{d})}\norm{\sum_{\lambda}\alpha_{\lambda} |\langle\hat{n}_{\lambda},\hat{x}\rangle|-\beta_d}_{L^{\infty}(\sigma_{d})}\notag \\
&\le \norm{\sum_{\lambda}\alpha_{\lambda} |\langle\hat{n}_{\lambda},\hat{x}\rangle|-\beta_d}_{L^1(\sigma_{d})}
\label{eq: holder inequality}
\end{align}
Where $L^1$, and $L^{\infty}$ stand for the $L^1$ and $L^{\infty}$ norm of the integral of measurable function. Additionally, since $\int (\sum_{\lambda}\alpha_{\lambda} |\langle\hat{n}_{\lambda},\hat{x}\rangle|-\beta_d) d\sigma_{d}(\hat{x})=0$, we have
\begin{equation}
\int_{\sum_{\lambda}\alpha_{\lambda} |\langle\hat{n}_{\lambda},\hat{x}\rangle|<\beta_d} \left|\sum_{\lambda}\alpha_{\lambda} |\langle\hat{n}_{\lambda},\hat{x}\rangle|-\beta_d\right|d\sigma_{d}(\hat{x})=\frac{1}{2}\norm{\sum_{\lambda}\alpha_{\lambda} |\langle\hat{n}_{\lambda},\hat{x}\rangle|-\beta_d}_{L^1(\sigma_d)},
\end{equation}
we therefore have
\begin{equation}
\sup_{\hat{x}}\left(\beta_d-\sum_{\lambda}\alpha_{\lambda} |\langle\hat{n}_{\lambda},\hat{x}\rangle|\right)\ge\frac{1}{2}\norm{\sum_{\lambda}\alpha_{\lambda} |\langle\hat{n}_{\lambda},\hat{x}\rangle|-\beta_d}_{L^1(\sigma_d)}\ge  \frac{1}{2}\norm{\sum_{\lambda}\alpha_{\lambda} |\langle\hat{n}_{\lambda},\hat{x}\rangle|-\beta_d}_{L^2(\sigma_d)}^2\ge \frac{1}{2}c_d^2n^{-\frac{d+2}{d-1}}.
\end{equation}
In the end we have
\begin{equation}
 \sup_{\{2\alpha_{\lambda} \hat{n}_{\lambda}\}}
\inf_{\hat{x}}\sum_{\lambda}\alpha_{\lambda} |\langle\hat{n}_{\lambda},\hat{x}\rangle| \le \beta_d-\frac{1}{2}c_d^2n^{-\frac{d+2}{d-1}},
\end{equation}
from which we conclude:
\begin{align}
R(n)\le R_*(n)\le \frac{1}{2}-\frac{1}{2}c_3^2n^{-\frac{5}{2}}&\Rightarrow \gamma(\omega_r)\ge c'_3|\frac{1}{2}-r|^{-\frac{2}{5}}\notag\\
R^p(n)\le R^p_*(n)\le \frac{2}{\pi}-\frac{1}{2}c_2^2n^{-4}&\Rightarrow \gamma^p(\omega_r)\ge c'_2|\frac{2}{\pi}-r|^{-\frac{1}{4}}\notag\\
\end{align}
\yujie for some positive constant $c_d':=(\frac{1}{2}c^2_d)^{\frac{d-1}{d+2}}$. \blk
\end{proof}
\begin{remark}
The bound here for planar measurements $(d=2)$ is less tight than the upper bound we obtained in corollary~\ref{cor:upper planer}.
\end{remark}
\begin{lemma}[\yujie Theorem 1 \cite{siegel2023} \blk]
\label{Prop:lb-Bourgain_SM}
There exists a positive constant $C_d$ and zonotope in $\mbb{R}^d$ with $2n$ 
 generators $\{\pm\alpha_{\lambda} \hat{n}_{\lambda}\}_{\lambda=1}^n$ with $\sum\alpha_{\lambda}=1$ such that 
\begin{equation}
\left|\sum_{\lambda=1}^n\alpha_{\lambda} |\hat{n}_{\lambda}\cdot\hat{x}|-\beta_d\right|<C_dn^{-\frac{d+2}{2(d-1)}}\qquad\forall \hat{x}\in S_{d-1},
\end{equation}
where $\beta_d :=\int_{S_{d-1}}  |\langle\hat{n},\hat{x}\rangle| d\sigma_{d}(\hat{x})$, e.g., $\beta_2=\tfrac{2}{\pi}$ and $\beta_3=\tfrac{1}{2}$.
\end{lemma}
\noindent  Since $\inf_{\hat{x}}\sum_{\lambda=1}^n\alpha_{\lambda} |(\hat{x}\cdot\hat{n}_{\lambda})|\le \sum_{\lambda=1}^n \alpha_{\lambda}\int |(\hat{x}\cdot\hat{n}_{\lambda})|d\sigma_{d}(\hat{x})=\beta_d\sum_{\lambda=1}^n \alpha_{\lambda}=  \beta_d$ by the definition in Eq.~\eqref{eq:defbeta},  from lemma~\ref{Prop:lb-Bourgain_SM},
\begin{equation}
\begin{split}
R(2n)&\ge 
 R(\text{sym}\{ \Pi_{\lambda}\})
= \inf_{\hat{x}}\sum_{\lambda=1}^n\alpha_{\lambda} |(\hat{x}\cdot\hat{n}_{\lambda})|>\beta_d-C_dn^{-\frac{d+2}{2(d-1)}}.
 \end{split}
\end{equation}
Thus we immediately obtain an upper bound on the simulation cost:
\begin{theorem}
For $r<\beta_3=\frac{1}{2}$, the simulation cost is upper bounded by:
\begin{align}
N(r)&\le C'_3|\frac{1}{2}-r|^{-4/5}
\end{align}
for some positive constants $C'_3=2(C_3)^{\frac{4}{5}}$.
\end{theorem}

\section{Criterion for compatible radius $R(\{ \Pi_{\lambda}\})$}
\label{appendixC}
\yujie The following proposition is closely related to theorem 1 in \cite{Nguyen2019} for quantum steering for bipartite state. \blk

\setcounter{proposition}{4}
\begin{proposition}
For a given qubit POVM $\{ \Pi_{\lambda}=\alpha_{\lambda}(\mbb{I}+\eta_{\lambda}\hat{n}_{\lambda}\cdot\vec{\sigma})\}$, the compatible radius $R(\{ \Pi_{\lambda}\})$ is given by:
\begin{equation}
R(\{ \Pi_{\lambda}\})= \inf_{\substack{|\vec{c}|=1\\-1\le c_0\le 1}}\sum_{\lambda}|\alpha_{\lambda} (c_0+\eta_{\lambda} \vec{c}\cdot\hat{n}_{\lambda})|.
\end{equation}
\end{proposition}
\begin{proof}
We start by noticing that the set $\mc{M}_{\{ \Pi_{\lambda}\}}$ is convex for any given $\{ \Pi_{\lambda}\}$.  Therefore, we can always find a set of tight inequalities that bound $\mc{M}_{\{ \Pi_{\lambda}\}}$. Such an inequality can be represented using an operator $C=c_0\mbb{I}+\vec{c}\cdot\vec{\sigma}$. Let $N\in\mc{M}_{\{ \Pi_{\lambda}\}}$, then for any $C$ we can write the inequality as
\begin{equation}
\langle C,N\rangle \le \sup_{{M}\in\mc{M}_{\{ \Pi_{\lambda}\}}} \langle C,M\rangle
\end{equation}
where $\langle X,Y\rangle=\tr[XY]$. 
\par
Given a POVM $N=n_0(\mbb{I}+\vec{n}\cdot\vec{\sigma})$, we can proceed to simplify the above inequality:
\begin{equation}
\begin{split}
    n_0(c_o+\vec{c}\cdot\vec{n})&\le \sum_{\lambda}\sup_{0\le x_{\lambda}\le 1}\alpha_{\lambda} x_{\lambda}(c_0+\eta_{\lambda} \vec{c}\cdot\hat{n}_{\lambda})\\
    &=\sum_{\lambda}\max[\alpha_{\lambda} (c_0+\eta_{\lambda} \vec{c}\cdot\hat{n}_{\lambda}),0],
\end{split}
\end{equation}
where the last equality always holds by setting $x_{\lambda}=0$ whenever $c_0+\eta_{\lambda} \vec{c}\cdot\hat{n}_{\lambda}\le 0$ and $x_{\lambda}=1$ otherwise. Moreover for any unbiased measurement $N\in\mc{M}_{\{ \Pi_{\lambda}\}}$ with $n_0=\frac{1}{2}$ (\yujie which takes noisy spin measurements as a special case\blk), we can simplify the above expression as:
\begin{equation}
\begin{split}
 \vec{c}\cdot\vec{n}\le 2\sum_{\lambda}\max[\alpha_{\lambda} (c_0+\eta_{\lambda} \vec{c}\cdot\hat{n}_{\lambda}),0]-c_0&=2\sum_{\lambda}\max[\alpha_{\lambda} (c_0+\eta_{\lambda} \vec{c}\cdot\hat{n}_{\lambda}),0]-\sum_{\lambda}\alpha_{\lambda} c_0-\sum_{\lambda}\alpha_{\lambda} \eta_{\lambda} \vec{c}\cdot\hat{n}_{\lambda}\\
    &=\sum_{\lambda}|\alpha_{\lambda} (c_0+\eta_{\lambda} \vec{c}\cdot\hat{n}_{\lambda})|,
\end{split}
\end{equation}
where we use  the identity $2\max[t,0]-t=|t|$,  $\sum_{\lambda}\alpha_{\lambda} =1~\text{and}~\sum_{\lambda}\alpha_{\lambda} \eta_{\lambda} \hat{n}_{\lambda}=\vec{0}$ in the first equality. \par
When varying over all choices of operator $C$ (thus, all inequalities that bound the convex set $\mc{M}_{\{ \Pi_{\lambda}\}}$), we finally arrive at the criteria:
\begin{equation}
|\vec{n}|\le \inf_{c_0,\vec{c}}\frac{\sum_{\lambda}|\alpha_{\lambda} (c_0+\eta_{\lambda} \vec{c}\cdot\hat{n}_{\lambda})|}{|\vec{c}|}.
\end{equation}
Since we are always allowed to scale $c_0$ and $|\vec{c}|$, the above infimum can be further simplified with constraint $|\vec{c}|=1$ and $-1\le c_0\le 1$ (the last constraint follows from the fact that minimal of any sum of sign functions $\inf_x\sum_i|x+b_i|$ is achieved with $|x|\le \sup_i|b_i|$):
\begin{equation}
R(\{ \Pi_{\lambda}\})= \inf_{\substack{|\vec{c}|=1\\-1\le c_0\le 1}}\sum_{\lambda}|\alpha_{\lambda} (c_0+\eta_{\lambda} \vec{c}\cdot\hat{n}_{\lambda})|
\end{equation}
The compatibility radius $R(n)$ can then be expressed as
\begin{equation}
R(n)=\sup_{\substack{\{ \Pi_{\lambda}\}}} \inf_{\substack{|\vec{c}|=1\\-1\le c_0\le 1}}\sum_{\lambda}|\alpha_{\lambda} (c_0+\eta_{\lambda} \vec{c}\cdot\hat{n}_{\lambda})|,
\end{equation}
\noindent where the supremum is taken over all $n$-element POVMs. 
\end{proof}
\setcounter{corollary}{2}
\begin{corollary}
    $R(n)$ is maximized by rank-1 POVM $\{\Pi_n\}$, i.e., $\eta_{\lambda} =1$ for $\lambda \in [n]$. 
\end{corollary}
\begin{proof}
Given any POVM $\{\Pi_{\lambda}\}$ with $ \Pi_{\lambda}=\alpha_{\lambda} (\mbb{I}+\eta_{\lambda} \hat{n}_{\lambda}\cdot\vec{\sigma})$, we can define POVM $\{\Pi'_{\lambda}\}$ with:
\begin{equation*}
\Pi'_{\lambda}=\beta_{\lambda}(\mbb{I}+\hat{n}_{\lambda}\cdot\vec{\sigma})
\end{equation*}
where $\beta_{\lambda}=\frac{\alpha_{\lambda} \eta_{\lambda} }{\sum_j\alpha_j\eta_j}$. Let $c^*_0$ and $\vec{c^*}$ be such that
\begin{equation*}
R(\{ \Pi_{\lambda}'\})= \inf_{\substack{|\vec{c}|=1\\-1\le c_0\le 1}}\sum_{\lambda}|\beta_{\lambda}(c_0+\vec{c}\cdot\hat{n}_{\lambda})|=\sum_{\lambda}|\beta_{\lambda}(c^*_0+\vec{c^*}\cdot\hat{n}_{\lambda})|=\frac{1}{\sum_j\alpha_j\eta_j}\sum_{\lambda}|\alpha_{\lambda} \eta_{\lambda} (c^*_0+\vec{c^*}\cdot\hat{n}_{\lambda})|
\end{equation*}
For each individual term above, 
\begin{equation*}
|\alpha_{\lambda} \eta_{\lambda} (c^*_0+\vec{c^*}\cdot\hat{n}_{\lambda})|+|(1-\eta_{\lambda} )\alpha_{\lambda} c_0^*|\ge|\alpha_{\lambda} (c^*_0+\eta_{\lambda} \vec{c^*}\cdot\hat{n}_{\lambda})|.
\end{equation*}
Summing over $\lambda$, we have
\begin{equation*}
\sum_{\lambda}|\alpha_{\lambda} \eta_{\lambda} (c^*_0+\vec{c^*}\cdot\hat{n}_{\lambda})|+\sum_{\lambda}|(1-\eta_{\lambda} )\alpha_{\lambda} c_0^*|\ge \sum_{\lambda}|\alpha_{\lambda} (c^*_0+\eta_{\lambda} \vec{c^*}\cdot\hat{n}_{\lambda})|,
\end{equation*}
which is equivalent to
\begin{equation*}
\sum_j\alpha_j\eta_jR(\{ \Pi_{\lambda}'\})+ (1-\sum_j\alpha_j\eta_j)|c_0^*|\ge \sum_{\lambda}|\alpha_{\lambda} (c^*_0+\eta_{\lambda} \vec{c^*}\cdot\hat{n}_{\lambda})|\ge  R(\{ \Pi_{\lambda}\}).
\end{equation*}
The last inequality holds because $(c^*_0, \vec{c^*})$ is optimal for $\{\Pi'_{\lambda}\}$ but might not be the optimal choice for $\{\Pi_{\lambda}\}$.\par
The last piece of the proof relies on showing $R(\{ \Pi_{\lambda}'\})\ge |c_0^*|$. This holds because:
\begin{equation*}
R(\{ \Pi_{\lambda}'\})=\sum_{\lambda}|\beta_{\lambda}(c^*_0+\vec{c^*}\cdot\hat{n}_{\lambda})|\ge |\sum_{\lambda}\beta_{\lambda}(c^*_0+\vec{c^*}\cdot\hat{n}_{\lambda})|=|c^*_0|,
\end{equation*}
where the last equality holds since $\sum_{\lambda}\Pi_{\lambda}=\mbb{I}$ (i.e., $\sum_{\lambda}\beta_{\lambda}\hat{n}_{\lambda}=\vec{0}$ and $\sum_{\lambda}\beta_{\lambda}=1$ ). Therefore, we finally have: 
\begin{equation*}
R(\{ \Pi_{\lambda}'\})\ge R(\{ \Pi_{\lambda}\}).
\end{equation*}
\end{proof}
\begin{remark}
For symmetric POVM $\text{sym}\{ \Pi_{\lambda}\}$, the compatible radius can be computed as
\begin{align}
R(\text{sym}\{ \Pi_{\lambda}\})= \inf_{\hat{c}}\sum_{\lambda=1}^n\alpha_{\lambda} |(\hat{c}\cdot\eta_{\lambda}\hat{n}_{\lambda})|\label{eq:symradius}
\end{align}
To show this is true for all symmetrically symmetric POVM, notice $\inf_{x}(|x+y|+|x-y|)=2|y|$, thus
\begin{equation}
R(\text{sym}\{ \Pi_{\lambda}\})= \inf_{\substack{|\vec{c}|=1\\-1\le c_0\le 1}}\frac{1}{2}[\sum_{\lambda}|\alpha_{\lambda} (c_0+\eta_{\lambda} \vec{c}\cdot\hat{n}_{\lambda})|+\sum_{\lambda}|\alpha_{\lambda} (c_0-\eta_{\lambda} \vec{c}\cdot\hat{n}_{\lambda})|]=\inf_{\hat{c}}\sum_{\lambda=1}^n\alpha_{\lambda} |(\hat{c}\cdot\eta_{\lambda}\hat{n}_{\lambda})|
\end{equation}
\end{remark}
\yujie In the following we assume $\eta_{\lambda}=1$ for convenience. \blk
\begin{corollary}
The Compatible radius $R(\{ \Pi_{\lambda}\})$ can be computed by considering a finite set of less than $n\choose 3$ operators $C=c_0\mbb{I}+\vec{c}\cdot\vec{\sigma}$, where $(c_0,\vec{c})^T$ is defined by the norm vector of facets of $\mathfrak{M}_{\{ \Pi_{\lambda}\}}$.
\end{corollary}
\begin{proof}
As discussed in Section~\ref{appendixA}, the compatible region is a convex zonotope. Instead of running over all hyperplane boundaries given by operator $C=c_0\mbb{I}+\vec{c}\cdot\vec{\sigma}$, it suffices to consider a finite set of the operator $C$ that define the normal vectors of those facets. A similar idea was discussed in \cite{Nguyen2019}.
\par
From the property of a zonotope, every edge of the compatible region $\mf{M}_{\{ \Pi_{\lambda}\}}$ is always parallel to one of its generator vectors proportional to $(1,\hat{n}_{\lambda})$, while each facet is defined by $d-1=3$ edges of the set. Therefore, we just have to consider at most $\binom{n}{3}$ different choices of normal vectors. To be more specific, the vector $(c_0,\vec{c})$ should be perpendicular to three different vectors $(1,\hat{n}_{\lambda})$.  Hence, we can solve the equations $(c_0,\vec{c})\cdot (1,\hat{n}_{\lambda})^T=0$ and obtain:
\begin{equation}
  \vec{c}=\frac{(\hat{n}_x-\hat{n}_y)\times (\hat{n}_x-\hat{n}_z)}{|(\hat{n}_x-\hat{n}_y)\times (\hat{n}_x-\hat{n}_z)|},~~~~c_0=-\vec{c}\cdot\hat{n}_x
\end{equation}
where $\hat{n}_x, \hat{n}_y$ and $ \hat{n}_z$ are any choices of distinct vectors associated to POVM $\{ \Pi_{\lambda}\}$. Therefore, we can now simplify the criteria above by calculating the minimum value over a finite set of size $\binom{n}{3}$.\par
Similarly, for the case with planar measurements, the vector $(c_0,\vec{c})$ should be perpendicular to two different vectors $(1,\hat{n}_{\lambda})^T$.  Hence we can write:
\begin{equation}
    (c_0,\vec{c})^T={(1,\hat{n}_x)^T\times (1,\hat{n}_y)^T}
\end{equation}
where $\vec{c}, \hat{n}_{\lambda}$ are written as vectors in $\mbb{R}^2$. 
\end{proof}

\section{Example of $R(\{ \Pi_{\lambda}\})$ for symmetric POVMs, and compatible models}
\label{appendixE}
In this section, we compute the compatible radius $R(\{ \Pi_{\lambda}\})$ for different classes of symmetric POVMs, including: \textbf{Result 1}: the compatible radius for qubit rotationally symmetric planar measurements; \\
\textbf{Result 2}: a compatible model based on qubit rotationally symmetric planar measurements; \\
\textbf{Result 3}: the compatible radius for qubit POVMs with regular polyhedron configuration. 

\yujie These results are relatively easy to compute thanks to the remark mentioned in the last section, i.e., for symmetric POVM $\text{sym}\{ \Pi_{\lambda}\}$, the compatible radius can be computed as
\begin{align}
R(\text{sym}\{ \Pi_{\lambda}\})= \inf_{\hat{c}}\sum_{\lambda=1}^n\alpha_{\lambda} |(\hat{c}\cdot\eta_{\lambda}\hat{n}_{\lambda})|
\end{align}
\blk

\begin{result}
For equally spaced planar measurement $\{\Pi^{\rot}_{\lambda}=\frac{1}{n}(\mbb{I}+\hat{n}_{\lambda}\cdot\sigma)\}$ with $\hat{n}_{\lambda}=[\cos(\frac{2\pi \lambda}{n}),0,\sin(\frac{2\pi \lambda}{n})]^T$, we have:
\label{res: planar}
\begin{equation}
R^p(\{\Pi^{\rot}_{\lambda}\})=\begin{cases}
\frac{1}{n}\cot(\frac{\pi}{2n})\cos(\frac{\pi}{2n})~~~~&\text{if $n$ is odd}\\
\frac{2}{n}\cot(\frac{\pi}{n}) &\text{if $n$ is even}.
\end{cases}
\end{equation}
\end{result}
\begin{proof}
If $n$ is even, the original vectors $\hat{n}_{\lambda}$ form a regular polygon with $n$ sides. To get the extreme point $M^n_i\in \mc{M}_{\{ \Pi_{\lambda}\}}$, we can simply check that, they are obtained by adding up half of $\{\Pi^{\rot}_{\lambda}\}$:
\begin{equation}
M^n_i=\sum_{j=i}^{j=n/2+i-1}\Pi^{\rot}_j=\frac{1}{2}(\mbb{I}+\vec{m}_i\cdot\sigma)
\end{equation}
where $\{\vec{m}_i\}$ form exactly the same regular polygon with $|\vec{m}_i|=\frac{2}{n\sin(\frac{\pi}{n})}$. \par
Given an $n$-sided regular polygon, the ratio between its inscribed radius and circumscribed radius is $\frac{r_i}{r_c}=\cos(\frac{\pi}{n})$ (standard regular-polygon formula). Therefore, the circle contained in the $n$-sided polygon defined by $\vec{m}_i$ has radius
\begin{equation}
    r_n=\frac{r_i}{r_c}|\vec{m}_i|=\cos(\frac{\pi}{n})\cdot\frac{2}{n\sin(\frac{\pi}{n})}=\frac{2}{n}\cot(\frac{\pi}{n}).
\end{equation}
If $n$ is odd, the case is slightly different; instead, the extreme points can be enumerated as
\begin{equation}
\begin{split}
M^n_i&=\sum_{j=i}^{j=(n-1)/2+i-1}\Pi^{\rot}_j+\frac{1}{2}\Pi^{\rot}_{(n+1)/2+i-1}=\frac{1}{2}(\mbb{I}+\vec{m}_i\cdot\sigma)\\
W^n_i&=\sum_{j=i}^{j=(n-1)/2+i-1}\Pi^{\rot}_j+\frac{1}{2}\Pi^{\rot}_{i-1}=\frac{1}{2}(\mbb{I}+\vec{w}_i\cdot\vec{\sigma}).
\end{split}
\end{equation}
In total there are $2n$ vectors $\{\vec{m}_i\}\cup\{\vec{w}_i\}$ which forms a $2n$-sided regular polygon with $|\vec{m}_i|=|\vec{w}_i|=\frac{1}{n}\cot(\frac{\pi}{2n})$ and radius
\begin{equation}
    r_n=\frac{r_i}{r_c}|\vec{m}_i|=\cos(\frac{\pi}{2n})\cdot\frac{1}{n}\cot(\frac{\pi}{2n})=\frac{1}{n}\cot(\frac{\pi}{2n})\cos(\frac{\pi}{2n}).
\end{equation}
\end{proof}
From our Table I in the main text, we observe that when $n$  is odd, the rotationally symmetric scheme appears to be optimal. However, for even $n$ , it is consistently suboptimal. This can be quantitatively explained by the fact that the unbiased compatible region $\mathfrak{m}_{\{ \Pi_{\lambda}\}}$ is an $n$-sided polygon for even $n$, but a $2n$-sided polygon for odd $n$.  More specifically, when $n$  is even, there is an inherent trade-off between maximizing the number of sides of $\mathfrak{m}_{\{ \Pi_{\lambda}\}}$ and maintaining its symmetry. The rotationally symmetric scheme exhibits the highest degree of symmetry; however, in this case, the unbiased compatible region has only  $n$  sides. In contrast, by adopting a slightly asymmetric POVM, we can construct a $2n$-sided $\mathfrak{m}_{\{ \Pi_{\lambda}\}}$, which, while not a regular polygon, results in a larger inradius.
\begin{result}
Here we provide explicit compatible model (equivalently a local hidden state model) when the parent POVM is rotationally symmetric.
\end{result}
For finite $n$, with parent POVM $\Pi^{\rot}_{\lambda}=\frac{1}{n}(\mbb{I}+\hat{n}_{\lambda}\cdot\vec{\sigma})$, any arbitrary child POVM $T=\frac{1}{2}(\mbb{I}+R(\{\Pi^{\rot}_{\lambda}\})\hat{t}\cdot\vec{\sigma})$ with $\hat{t}=(\cos(\theta),0,\sin(\theta))^T$ can be simulated as 
$$T=\sum_{\lambda}p_{\lambda}\Pi^{\rot}_{\lambda}$$
{For even $n$:~~} 
$$ p_{\lambda}=\frac{1+\frac{\text{sign}(\hat{m}_1\cdot\hat{n}_{\lambda})\sin(\frac{\pi}{n}-x)+\text{sign}(\hat{m}_2\cdot\hat{n}_{\lambda})\sin(\frac{\pi}{n}+x)}{2\sin(\frac{\pi}{n})}}{2}$$\par

with $x=\theta +\frac{2\pi k}{n}\in [-\frac{\pi}{n},\frac{\pi}{n}]$,  $\hat{m}_1=[\cos(\frac{2\pi (k-\frac{1}{2})}{n}),0,\sin(\frac{2\pi (k-\frac{1}{2})}{n})]^T$, $\hat{m}_2=[\cos(\frac{2\pi (k+\frac{1}{2})}{n}),0,\sin(\frac{2\pi (k+\frac{1}{2})}{n})]^T$ and $k\in \mbb{Z}$.\\
\\
{For odd $n$:~~} 
\begin{align*}
    p_{\lambda}=\frac{1+\frac{\text{sign}(\hat{m}_1\cdot\hat{n}_{\lambda})\sin(\frac{\pi}{2n}-x)+\text{sign}(\hat{m}_2\cdot\hat{n}_{\lambda})\sin(\frac{\pi}{2n}+x)}{2\sin(\frac{\pi}{2n})}}{2}
\end{align*}\par
with $x=\theta +\frac{\pi k}{n}\in [-\frac{\pi}{2n},\frac{\pi}{2n}]$,  $\hat{m}_1=[\cos(\frac{\pi (k-\frac{1}{2})}{n}),0,\sin(\frac{\pi (k-\frac{1}{2})}{n})]^T$, $\hat{m}_2=[\cos(\frac{\pi (k+\frac{1}{2})}{n}),0,\sin(\frac{\pi (k+\frac{1}{2})}{n})]^T$ and $k\in \mbb{Z}$.\\
When taking $n\rightarrow \infty$, we have $\hat{m}_1=\hat{m}_2=\hat{t}$ and $\sin(f(x))\rightarrow f(x)$ when $f(x)\rightarrow 0$. The above model becomes:
\begin{equation}
 p_{\lambda}=\frac{1+\frac{\text{sign}(\hat{m}_1\cdot\hat{n}_{\lambda})\sin(\frac{\pi}{n}-x)+\text{sign}(\hat{m}_2\cdot\hat{n}_{\lambda})\sin(\frac{\pi}{n}+x)}{2\sin(\frac{\pi}{n})}}{2}\longrightarrow \frac{1+\frac{\text{sign}(\hat{t}\cdot\hat{n}_{\lambda})(\frac{2\pi}{n}-x)+\text{sign}(\hat{t}\cdot\hat{n}_{\lambda})x}{2\frac{\pi}{n}}}{2}=\frac{1+\text{sign}(\hat{t}\cdot\hat{n}_{\lambda})}{2}   
\end{equation}
Letting $\lim_{n\rightarrow\infty}\sum_{i=1}^n(\cdot)\frac{1}{n}=\int_{0}^{2\pi}(\cdot)\frac{d\psi}{2\pi}$, we can define $\Pi_{\psi}=\frac{1}{2\pi}(\mbb{I}+\hat{n}_{\psi}\cdot\vec{\sigma})$ and $p_{\psi}=\frac{1+\text{sign}(\hat{t}\cdot\hat{n}_{\psi})}{2}$ from the discrete one above to a continuous one, where $\hat{n}_{\psi}=(\cos(\psi),0,\sin(\psi))^T$. Therefore, any child POVM $T=\frac{1}{2}(\mbb{I}+\frac{2}{\pi}\hat{t}\cdot\vec{\sigma})$ can be written as:
\begin{equation}
T=\int_0^{2\pi} d\theta p_{\psi} \Pi_{\phi}=\int_0^{2\pi} d\psi \left(\frac{1+\text{sign}(\hat{t}\cdot\hat{n}_{\psi})}{2}\right) \frac{1}{2\pi}(\mbb{I}+\hat{n}_{\psi}\cdot\vec{\sigma})=\frac{1}{2}(\mbb{I}+\frac{2}{\pi}\hat{t}\cdot\vec{\sigma})
\end{equation}
\\
\\
\begin{result}
For a parent POVM $\{ \Pi_{\lambda}\}$ with platonic-solid configuration, we can compute the $R(\{ \Pi_{\lambda}\})$ similarly using Result~\ref{res: planar}. The polyhedron $\mc{M}_{\{ \Pi_{\lambda}\}}$ associated with different platonic configurations are summarized below:
\begin{table}[h]
    \centering
    \begin{tabular}{|c|c|c|c|}
    \hline
Complexity & POVM $\{ \Pi_{\lambda}\}$ & Compatible region $\msf{m}_{\{ \Pi_{\lambda}\}}$ & compatible radius $R(\{ \Pi_{\lambda}\})$\\
\hline
4 &  tetrahedron & Octahedron &$\frac{1}{3}$\\
\hline
6 &   Octahedron & Cube & $\frac{1}{3}$\\
\hline
8 &   Cube & Rhombic dodecahedron &$\frac{\sqrt{6}}{6}$\\
\hline
12 &  Icosahedron & Rhombic triacontahedron & $\frac{\phi^3\sqrt{1+(1-\phi)^2}}{3({1+\phi^2})}$\\ 
\hline
 20 &  Dodecahedron &  Rhombic enneacontahedron & $\sqrt{\frac{5}{6}}\frac{\phi^2}{5}$\\
 \hline
    \end{tabular}
    \label{tab:my_label}
    \caption{Parent POVM with platonic configuration and their associated unbiased compatible region $\msf{m}_{\{ \Pi_{\lambda}\}}$, where $\phi=\frac{1+\sqrt{5}}{2}$ is the golden ratio.}
\end{table}\\
\begin{figure}[t]
    \centering
    \includegraphics[width=0.95\textwidth]{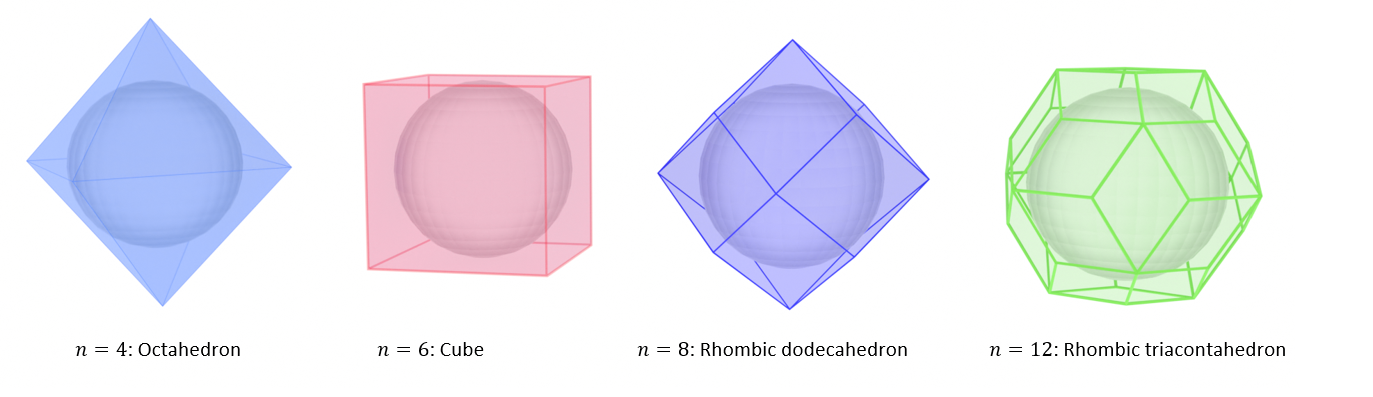}
    \caption{Compatible region for Platonic solid with $n=4,6,8,12$. From left to right: Octahedron, Cube, Rhombic dodecahedron and Rhombic triacontahedron .}
    \label{fig:planar}
\end{figure}
\par 
\textbf{Tetrahedron}:
With the vertices of an Tetrahedron, we can construct a 4-element parent POVM as follows:
\begin{equation}
{\Pi}^T_{\lambda}=\frac{1}{4}(\mbb{I}+\hat{n}_{\lambda}\cdot\vec{\sigma})~~~~\text{with}~~ \hat{n}_{\lambda}\in\left\{\frac{1}{\sqrt{3}}(1,1,1),\frac{1}{\sqrt{3}}(1,-1,-1),\frac{1}{\sqrt{3}}(-1,1,-1),\frac{1}{\sqrt{3}}(-1,-1,1)\right\}\\
\end{equation}
The extreme points of the associated compatible region $\msf{m}_{\{ \Pi_{\lambda}\}}$ can be computed as:
\begin{equation}
\begin{split}
{M}^T_{\lambda}&=\frac{1}{2}(\mbb{I}+\vec{m}_{\lambda}\cdot\vec{\sigma})~~~~\text{with}~~ \vec{m}_{\lambda}\in\left\{\frac{1}{\sqrt{3}}(\pm 1,0,0),\frac{1}{\sqrt{3}}(0,\pm 1,0),\frac{1}{\sqrt{3}}(0,0,\pm 1)\right\}\\
\end{split}
\end{equation}
These 6 vertices altogether form a convex polyhedron -- Octahedron, to calculate the inscribed radius, we have:
 $$r^T=|\vec{m}_{\lambda}|\cdot\frac{1}{\sqrt{3}}=\frac{1}{3}$$
\par
\textbf{Octahedron}:
With the vertices of a Octahedron, we can construct a 6-element parent POVM as follows:
\begin{equation}
\Pi^O_{\lambda}=\frac{1}{6}(\mbb{I}+\hat{n}_{\lambda}\cdot\vec{\sigma})~~~~\vec{n}_{\lambda}\in\{(\pm1,0,0),(0,\pm1,0),(0,0,\pm1)\},
\end{equation}
The extreme points of the associate compatible region $\msf{m}_{\{ \Pi_{\lambda}^O\}}$ can be computed as:
\begin{equation}
\begin{split}
{M}^O_{\lambda}&=\frac{1}{2}(\mbb{I}+\vec{m}_{\lambda}\cdot\sigma)~~~~\vec{m}_{\lambda}\in\{\frac{1}{3}(\pm1,\pm1,\pm1)\},
\end{split}
\end{equation}
These 8 vertices altogether form a convex polyhedron -- Cube, to calculate the inscribed radius, we have:
 $$r^O=|\vec{m}_{\lambda}|\cdot\frac{1}{\sqrt{3}}=\frac{1}{3}$$
\par
\textbf{Cube}:
With the vertices of a Cube, we can construct an 8-element parent POVM as follows:
\begin{equation}
{\Pi}^C_{\lambda}=\frac{1}{8}(\mbb{I}+\hat{n}_{\lambda}\cdot\vec{\sigma})~~~~\vec{n}_{\lambda}\in\{\frac{1}{\sqrt{3}}(\pm1,\pm1,\pm1)\}\\
\end{equation}
The extreme point of the associated compatible region $\msf{m}_{\{ \Pi_{\lambda}^C\}}$ can be computed as:
\begin{equation}
{M}^C_{\lambda}=\frac{1}{2}(\mbb{I}+\vec{m}_{\lambda}\cdot\sigma)~~~~\text{with}~~\vec{m}_{\lambda}\in\{\frac{1}{2\sqrt{3}}(\pm1,\pm1,\pm1),\frac{1}{2\sqrt{3}}(\pm2,0,0),\frac{1}{2\sqrt{3}}(0,\pm2,0),\frac{1}{2\sqrt{3}}(0,0,\pm2)\}
\end{equation}
These 14 vertices altogether form a convex polyhedron --{Rhombic dodecahedron} with edge length $a=|\vec{m}_{\lambda}-\vec{m}_j|=\frac{1}{2}$
 $$r^C=\frac{\sqrt{6}}{3}\cdot\frac{1}{2}=\frac{\sqrt{6}}{6}\approx0.408$$
\par
\textbf{Icosahedron}:
With the vertices of an icosahedron, we can construct a 12-element parent POVM as following:
\begin{equation}
{\Pi}^I_{\lambda}=\frac{1}{12}(\mbb{I}+\hat{n}_{\lambda}\cdot\vec{\sigma})~~~~\text{with}~~ \hat{n}_{\lambda}\in\left\{\frac{1}{\sqrt{1+\phi^2}}(0,\pm1,\pm\phi),\frac{1}{\sqrt{1+\phi^2}}(\pm\phi,0,\pm1),\frac{1}{\sqrt{1+\phi^2}}(\pm1,\pm\phi,0)\right\}\\
\end{equation}
where $\phi=\frac{1+\sqrt{5}}{2}$.  This time the extreme points of the associate compatible region $\msf{m}_{\{ \Pi_{\lambda}\}}$ can be computed as:
\begin{equation}
\begin{split}
{M}^I_{\lambda}&=\frac{1}{2}(\mbb{I}+\hat{m}_{\lambda}\cdot\vec{\sigma})~~~~\text{with}~~ \hat{m}_i\in\left\{{\frac{\phi}{3\sqrt{1+\phi^2}}}(0,\pm1,\pm\phi),{\frac{\phi}{3\sqrt{1+\phi^2}}}(\pm\phi,0,\pm1),{\frac{\phi}{3\sqrt{1+\phi^2}}}(\pm1,\pm\phi,0)\right\}\\
&\bigcup\left\{\frac{\phi}{3\sqrt{1+\phi^2}}(\pm1,\pm1,\pm1),\frac{\phi}{3\sqrt{1+\phi^2}}(0,\pm\phi,\pm1/\phi),\frac{\phi}{3\sqrt{1+\phi^2}}(\pm1/\phi,0,\pm\phi),\frac{\phi}{3\sqrt{1+\phi^2}}(\pm\phi,\pm1/\phi,0)\right\}\\
\end{split}
\end{equation}
These 32 vertices altogether form a convex polyhedron -- Rhombic triacontahedron, to calculate the inscribed radius, we first notice that the edge length of Rhombic triacontahedron given as $a=|\vec{m}_{\lambda}-\vec{m}_j|=\frac{\phi}{3\sqrt{1+\phi^2}}\sqrt{1+(1-\phi)^2}$, and the inscribed radius is $r=\frac{\phi^2}{\sqrt{1+\phi^2}}a$. Therefore, the radius of the shrinking Bloch sphere built by parent POVM $\{\Pi^I_i\}$ will be $$r^I=\frac{\phi}{3\sqrt{1+\phi^2}}\sqrt{1+(1-\phi)^2}\cdot\frac{\phi^2}{\sqrt{1+\phi^2}}\approx0.4588$$
\par 
\textbf{Dodecahedron:} With the vertices of a dodecahedron, we can construct a 20-element parent POVM as follows:
\begin{equation}
\begin{split}
{\Pi}^D_{\lambda}&=\frac{1}{20}(\mbb{I}+\hat{n}_{\lambda}\cdot\vec{\sigma})~~~~\text{with}~~ \hat{n}_{\lambda}\in\left\{\frac{1}{\sqrt{3}}(\pm1,\pm1,\pm1)\right\}\\
&\text{or with}~~ \hat{n}_{\lambda}\in\left\{\frac{1}{\sqrt{1+\phi^4}}(0,\pm1,\pm\phi^2),\frac{1}{\sqrt{1+\phi^4}}(\pm\phi^2,0,\pm1),\frac{1}{\sqrt{1+\phi^4}}(\pm1,\pm\phi^2,0)\right\}\\
\end{split}
\end{equation}
The extreme points of the associate compatible region $\msf{m}_{\{ \Pi_{\lambda}^D\}}$ can be computed as:
\begin{equation}
\begin{split}
{M}^D_{\lambda}&=\frac{1}{2}(\mbb{I}+\hat{m}_{\lambda}\cdot\vec{\sigma})~~~~\text{with}~~ \hat{m}_i\in\left\{{\frac{2(\phi^2+1)}{10\sqrt{3}\phi}}(0,\pm\phi,\pm1),{\frac{2(\phi^2+1)}{10\sqrt{3}\phi}}(\pm\phi,\pm1,0),{\frac{2(\phi^2+1)}{10\sqrt{3}\phi}}(0\pm1,0,\pm\phi)\right\}\\
&\bigcup\left\{\frac{2\phi^2}{10\sqrt{3}}(\pm1,\pm1,\pm1),\frac{2\phi^2}{10\sqrt{3}}(0,\pm1/\phi,\pm\phi),\frac{2\phi^2}{10\sqrt{3}}(\pm\phi,0,\pm1/\phi),\frac{2\phi^2}{10\sqrt{3}}(\pm1/\phi,\pm\phi,0)\right\}\\
&\bigcup\left\{\frac{2\phi^2}{10\sqrt{3}}(\pm2/\phi,\pm1/\phi^2,\pm1),\frac{2\phi^2}{10\sqrt{3}}(\pm1/\phi^2,\pm1,\pm2/\phi),\frac{2\phi^2}{10\sqrt{3}}(\pm1,\pm2/\phi,\pm1/\phi^2)\right\}\\
&\bigcup\left\{\frac{2\phi^2}{10\sqrt{3}}(\pm\phi,\pm1/\phi,\pm1/\phi),\frac{2\phi^2}{10\sqrt{3}}(\pm1/\phi,\pm\phi,\pm1/\phi),\frac{2\phi^2}{10\sqrt{3}}(\pm1/\phi,\pm1/\phi,\pm\phi)\right\}\\
&\bigcup\left\{{\frac{1}{10\sqrt{3}}}(\pm2(\phi^2+1),\pm2(\phi-1),0),{\frac{1}{10\sqrt{3}}}(\pm2(\phi-1),\pm2(\phi^2+1),0),{\frac{1}{10\sqrt{3}}}(0,\pm2(\phi^2+1),\pm2(\phi-1))\right\}
\end{split}
\end{equation}
These 92 vertices altogether form a convex polyhedron -- Rhombic enneacontahedron, the inradius of which can be computed as: 
$$r^D=\sqrt{\frac{5}{6}}\frac{\phi^2}{5}=0.4780$$
\end{result}

\section{Simulation cost of entangled state and compatible measurements}
\label{appendixF}
\setcounter{lemma}{0}
\yujie 
It is well known that a noisy two-qubit Werner state is unsteerable under projective measurements (PVMs) iff the corresponding family of noisy qubit spin observables with the same visibility is jointly measurable \cite{Uola2014,Quintino2014}. We establish an analogous correspondence here: (i) an LHS model for the noisy two-qubit Werner state under PVMs is equivalent to (ii) a compatibility model (parent POVM plus classical post-processing) for all noisy qubit spin measurements with the same noise parameter, under a fixed shared randomness budget. The proof is similar to that of \cite{Uola2014,Quintino2014}; we include it below for completeness. 
\blk
\begin{lemma}
$\mc{P}_{r}$ can be simulated with an $n$-element POVM if and only if the simulation cost of Werner states $\omega_r$ under \textit{projective measurements} equals $n$.
\end{lemma}
\yujie
\begin{proof}
Assume $\mc P_r$ is simulated by an $n$-outcome parent POVM $\{\Pi_\lambda\}_{\lambda=1}^n$ with classical post-processing, then any noisy spin measurement $M_{\pm|\hat{n}}^{(r)}=rM_{\pm|\hat{n}}+\frac{1}{2}(1-r)\mbb{I}$ can be simulated as:
\begin{equation}
    M_{\pm|\hat{n}}^{(r)}=\sum_{\lambda=1}^n p(\pm|\hat{n},\lambda)\,\Pi_\lambda, 
\quad \sum_{\pm} p(\pm|\hat{n},\lambda)=1.
\end{equation}
Therefore, for any noisy Werner state $\omega_r=r\op{\Psi^-}{\Psi^-}+\frac{1}{4}(1-r)\mbb{I}\otimes\mbb{I}$ the assemblage on $B$ prepared with noisy spin measurement is
\begin{equation}
\sigma^{(r)}_{\pm|\hat{n}}=\tr_A\big[(M_{\pm|\hat{n}}\otimes\mbb I)\omega_r\big]=\tr_A\!\big[(M_{\pm|\hat{n}}^{(r)}\otimes\mbb I)\op{\Psi^-}{\Psi^-}\big]
=\sum_{\lambda=1}^n p(\pm|\hat{n},\lambda)\,
\tr_A\!\big[(\Pi_\lambda\otimes\mbb I)\op{\Psi^-}{\Psi^-}\big].
\end{equation}
Let $\rho_\lambda:=\tr_A[(\Pi_\lambda\otimes\mbb I)\op{\Psi^-}{\Psi^-}]$.  
Then $\sigma^{(r)}_{\pm|\hat{n}}=\sum_{\lambda}p(\pm|\hat{n},\lambda)\rho_\lambda$, so $\{\rho_\lambda\}$ is a valid LHS ensemble.
Hence the Werner assemblage admits an LHS model with $n$ hidden states (shared randomness cost $\log_2 n$ bits).

Conversely, suppose the Werner assemblage admits an LHS model with $n$ hidden states:
$\sigma^{(r)}_{\pm|\hat{n}}=\sum_{\lambda} p(\pm|\hat{n},\lambda)\,\rho_\lambda$, with
$\rho_\lambda\ge0$ and $\sum_{\lambda}\rho_\lambda=\tfrac12\mbb I$.
From the identity $\tr_A\big[(X\otimes\mbb I)\,\op{\Psi^-}{\Psi^-}\big]
=\frac{1}{2}\sigma_yX^{T}\sigma_y$, we have
\begin{equation}
\frac{1}{2}\sigma_y\big(M_{\pm|\hat{n}}^{(r)}\big)^{T}\sigma_y
=\sum_{\lambda} p(\pm|\hat{n},\lambda)\,\rho_\lambda \Rightarrow
\frac{1}{2}M_{\pm|\hat{n}}^{(r)}
=\sum_{\lambda} p(\pm|\hat{n},\lambda)\big(\sigma_y\rho_\lambda\sigma_y\big)^{T}.
\end{equation}
Therefore, we can define $\Pi_\lambda:=2\,\big(\sigma_y\rho_\lambda\sigma_y\big)^{T}\ge0$, which satisfies
\[
M_{\pm|\hat{n}}^{(r)}=\sum_{\lambda} p(\pm|\hat{n},\lambda)\Pi_\lambda,
\qquad
\sum_{\lambda}\Pi_\lambda
=2\Big(\sum_{\lambda}\sigma_y\,\rho_\lambda\,\sigma_y\Big)^{\sf T}
=\mbb I,
\]
so $\{\Pi_\lambda\}$ is an $n$-outcome parent POVM that simulates $\mc P_r$.

\smallskip
Hence the minimal number of parent outcomes needed to simulate $\mc P_r$ equals the minimal number of hidden states in an LHS model for $\omega_r$ under PVMs; equivalently, both simulation costs are $n$ (or $\log_2 n$ shared bits).
\end{proof}

\blk

\section{Discussion on prop.~\ref{prop:general CC}, prop.~\ref{prop:planar CC} and conj.~\ref{conj:POVM vs PVM}} 
\label{appendixG}
\subsection{Inequivalence between simulation cost of PVMs and POVMs}
In an earlier paper\cite{Zhang2024, Renner2024}, the equivalence of POVMs and PVMs in the context of quantum steering was shown, which is by establishing that a compatible model exists for all noisy POVMs whenever such a model exists for all noisy PVMs at the same noise threshold. However, this elegant equivalence disappears when considering scenarios that involve only finite-shared randomness and specific parent POVMs (or specific local hidden states).\par 

In the main text, we conduct an extensive discussion on the feasibility of a compatible model for noisy PVMs with finite shared randomness. Notably, one intriguing observation we make there is that the optimal parent POVM for simulating the entire set of noisy PVMs may not necessarily be centrally symmetric, even if it does exist. The realization that asymmetry has alue in simulating all PVMs provides insight into the potential divergence between POVMs and PVMs within this restricted compatible model.\par 

In the following sections, we provide a rigorous proof for the aforementioned assertion using Farka’s lemma. \yujie 
To cast the simulation problem as a feasibility test, represent each rank-1 qubit parent effect
\(\Pi_{\lambda}=\alpha_{\lambda}\left(\mathbb{I}+\hat n_{\lambda}\cdot\vec\sigma\right)\)
by the Pauli-basis 4-vector
\(\vec\pi_{\lambda}=(\alpha_{\lambda},\alpha_{\lambda}\hat n_{\lambda})^{\top}\in\mathbb{R}^4\),
and each target (child) effect
\(M_a=\mu_a\left(\mathbb{I}+r\hat m_a\cdot\vec\sigma\right)\)
by
\(\vec m_a=\mu_a(1, r\hat m_a)^{\top}\).
Joint simulability with post-processing coefficients \(p_{a|\lambda}\ge 0\) is exactly the set of linear constraints
\[
\sum_{\lambda=1}^n p_{a|\lambda}\,\vec\pi_{\lambda}=\vec m_a\quad(\forall a),\qquad
\sum_{a=1}^m p_{a|\lambda}=1\quad(\forall \lambda).
\]
Stacking all outcomes yields a single system \(A\mathbf p=\mathbf b\) with
\(A\in\mathbb{R}^{(4m+n)\times (mn)}\) formed by \(m\) blocks
\([\vec\pi_1\ \cdots\ \vec\pi_n]\) (one per \(a\)) and \(n\) normalization rows,
\(\mathbf p=\big(p_{1|1},\ldots,p_{m|1},\,p_{1|2},\ldots,p_{m|n}\big)^{\top}\in\mathbb{R}^{mn}\),
and \(\mathbf b=(\vec m_1,\ldots,\vec m_m,1,\ldots,1)^{\top}\in\mathbb{R}^{4m+n}\). Specifically: \blk

\setcounter{MaxMatrixCols}{20}
\begin{align}
    A&=\begin{pmatrix} \vec{\pi}_1&\cdots&\vec{\pi}_n&0&\cdots&0&0&0&0\\
& \ddots &&& \ddots &&& \ddots &\\ 0&0&0&0&\cdots&0&\vec{\pi}_1&\cdots&\vec{\pi}_n\\
    1&\cdots  &0&1&\cdots    &0&1&\cdots  &0\\
    &\ddots  &&&\ddots  &&&\ddots  &\\
    0&\cdots  &1&0&\cdots  &1& 0&\cdots  &1\end{pmatrix},&    
    \boldsymbol{p}&=\begin{pmatrix}p_{1|1}\\ \vdots \\p_{1|n} \\p_{2|n}\\ \vdots\\ p_{m|n}\end{pmatrix},& \boldsymbol{b}&=\begin{pmatrix}\vec{m}_1\\\vdots \\\vec{m}_m\\1\\ \vdots\\ 1\end{pmatrix}.
\end{align}
\begin{lemma}[Farkas' Lemma]
\begin{equation}
  \nexists \boldsymbol{p}\ge \boldsymbol{0} \text{ s.t. } A\cdot \boldsymbol{p}=\boldsymbol{b} \Longleftrightarrow \exists y \text{ s.t. } A^T\cdot \boldsymbol{y}\ge \boldsymbol{0} \text{ and }\boldsymbol{b}^T\cdot \boldsymbol{y}<0
\end{equation}
where $\boldsymbol{y}\in\mbb{R}^{4m+n}$ as $y=(\vec{y}_1,\cdots,\vec{y}_m,z_1,\cdots,z_n)^T$. 
\end{lemma}
Therefore, to demonstrate the non-existence of a positive, normalized response function $p_{a|i}$ for simulating $\{\wt{M}_a\}$ with $\{\Pi_{\lambda}\}_{\lambda=1}^n$, it suffices to find a vector $\boldsymbol{y}$ such  that $A^T\cdot \boldsymbol{y}\ge \boldsymbol{0} \text{ and }\boldsymbol{b}^T\cdot \boldsymbol{y}<0$ holds at the same time. 
\begin{proposition}
There exists a five-effect parent POVM $\{\Pi_{\lambda}\}_{\lambda=1}^5$ with $R(\{\Pi_{\lambda}\})\approx 0.3714$, that is, it can simulate all noisy spin measurements with radius $r\le 0.3714$, whereas a compatible model fails to exist for some three-outcome noisy POVMs with radius $r> 0.3220$
\end{proposition}
\begin{proof}
The 5-effect POVM $\{\Pi_{\lambda}=\alpha_{\lambda}(\mbb{I}+\hat{n}_{\lambda}\cdot \sigma)\}$ to be considered is of the form of:
\yujie
\begin{align}
\vec{\pi}_1&=0.220(1,\hat{n}_3)^T~~~~\hat{n}_1=(-\frac{12}{55}, \sqrt{1-(\frac{12}{55})^2},0)^T \notag\\
\vec{\pi}_2&=0.220(1,\hat{n}_4)^T~~~~\hat{n}_2=(-\frac{12}{55}, -\frac{1}{2}\sqrt{1-(\frac{12}{55})^2},\frac{\sqrt{3}}{2}\sqrt{1-(\frac{12}{55})^2} )^T \notag\\
\vec{\pi}_3&=0.220(1,\hat{n}_5)^T~~~~\hat{n}_3=(-\frac{12}{55}, -\frac{1}{2}\sqrt{1-(\frac{12}{55})^2},-\frac{\sqrt{3}}{2}\sqrt{1-(\frac{12}{55})^2} )^T \notag \\
\vec{\pi}_4&=0.242(1,\hat{n}_1)^T~~~~\hat{n}_4=(1,0,0)^T \notag\\
\vec{\pi}_5&=0.098(1,\hat{n}_2)^T~~~~\hat{n}_5=(-1, 0, 0)^T 
\label{eq:5-parent}
\end{align}
\blk
First, we used the criteria for computing the compatible radius $R(\{\Pi_{\lambda}\})$ in appendix~\ref{appendixC} to obtain a close form expression: 
\begin{equation}
R(\{\Pi_{\lambda}\})=0.34+0.144\frac{12}{55}\approx 0.3714.
\end{equation}
Therefore, any PVMs with visibility $r\le 0.3714$ can be simulated by the corresponding $5$-outcome POVM. \par 
Now, we give examples showing the infeasibility of simulating three-outcome POVMs $\{M_a^r\}$ with this five-effect parent POVM for an even smaller threshold $r<R(\{\Pi_{\lambda}\})$.  The specific three-outcome POVM $\{M_a^r\}$ we consider is given by the four-vectors
\begin{align}
\vec{m}_1&=\frac{1}{3}(1,r\hat{m}_1)^T~~~~\hat{m}_1=(0,-1,0)^T \notag \\
\vec{m}_2&=\frac{1}{3}(1,r\hat{m}_2)^T~~~~\hat{m}_2=(0,\frac{1}{2}, \frac{\sqrt{3}}{2})^T \notag\\
\vec{m}_3&=\frac{1}{3}(1,r\hat{m}_3)^T~~~~\hat{m}_3=(0,\frac{1}{2}, -\frac{\sqrt{3}}{2})^T.
\label{eq:counter-3-outcome}
\end{align}
By considering vector $\boldsymbol{y}\in\mbb{R}^{4\times 3+5}$ as $y=(\vec{y}_1,\vec{y}_2,\vec{y}_3,z_1,z_2,z_3,z_4,z_5)^T$, where:
\begin{align}
\vec{y}_1&=(0,-\hat{m}_1)^T,~~~~\vec{y}_2=(0,-\hat{m}_2)^T~~~~\vec{y}_3=(0,-\hat{m}_3)^T\notag \\
z_1&=z_2=z_3=0.110\sqrt{1-(\frac{12}{55})^2},~~~~z_4=z_5=0,
\end{align}
and $A^T\cdot \boldsymbol{y}\ge 0$, one can easily verify that with $ \boldsymbol{b}=(\vec{m}_1,\vec{m}_2,\vec{m}_3,1,1,1,1,1)^T$:
\begin{equation}
   \boldsymbol{b}^T\cdot \boldsymbol{y}=-r+0.330\sqrt{1-(\frac{12}{55})^2}<0\Longrightarrow r>0.330\sqrt{1-(\frac{12}{55})^2}\approx 0.3220
\end{equation}
Therefore, from farkas' lemma, for $r>0.3220$, the three-outcome POVM defined in Eq.~\eqref{eq:counter-3-outcome} can never be simulated by the five-effect POVM $\{\Pi_{\lambda}\}$, whereas all PVMs can be simulated by it with $r<0.3714$.

\end{proof}
\setcounter{conjecture}{0}
\begin{conjecture}
The $5$-outcome POVM in Eq.~\eqref{eq:5-parent} is near optimal in terms of simulating PVMs, i.e, close to the optimal $\{\Pi_{\lambda}\}$ that has the largest compatible radius $R(\{\Pi_{\lambda}\}) \approx 0.3718$ reported in the main text.  Therefore, we conjecture that the set of noisy POVMs and noisy PVMs has different simulation costs. 
\end{conjecture}
\subsection{Equivalence between simulation cost of PVMs and POVMs for qubit planar measurements}
However, as noted in Proposition~\ref{prop:planar CC} in the main text, there exist special cases where the simulation cost of a set of noisy qubit PVMs and a set of noisy qubit POVMs is actually the same.

To continue our discussion, we define a few quantities generalized from the main text:
\\ (1) $N^p_{\text{POVM}}(r)$:  the simulation cost of noisy planar qubit POVMs with visibility $r$ ;
\\ (2)  $R^p_{\text{POVM}}(n)$ he compatibility radius for simulating noisy planar qubit POVMs with $n$-outcome measurements. \par 

The following lemma, which will be critical for the subsequent proposition, is stated and proved below:
\begin{lemma}
$R^p_{\text{POVM}}(3)=R^p(3)=\frac{1}{2}\Leftrightarrow  \gamma^p(\omega_{\frac{1}{2}})=N^p(\frac{1}{2})=3$
\label{lem: n=3}
\end{lemma}
\begin{proof}
We first notice that, $R^p_{\text{POVM}}(3)$ has to be obtained by $3$-outcome parent $\{\Pi_{\lambda}\}$ with linear independent effects since $R^p(3)=0$ otherwise.  \par 
$R^p_{\text{POVM}}(3)\le R^p(3)$ is obvious since PVMs form a subset of POVMs.\par  To show $R^p_{\text{POVM}}(3)\ge R^p(3)$, for planar POVM $\{M^r_a\}$, it suffices to consider three outcome measurements since they are extreme in the planar case,  given a fixed noise threshold $r$, if all noisy $3$-outcome qubit planar measurement can be simulated, then all noisy qubit planar POVMs can be simulated. Each of its effects $M_a^r$ can be simulated individually by $\{\Pi_{\lambda}\}$ for all $r\le R^p(3)$, therefore there exist $0\le p_{a|\lambda}\le 1$ such that $M^r_a=\sum_{\lambda=1}^3p_{a|\lambda}\Pi_{\lambda}$.\par 
Now considering the two normalization condition: $\sum_{a=1}^3M^r_a=\sum_{\lambda=1}^3\Pi_{\lambda}=\mbb{I}$, we have:
\begin{equation}
\sum_{\lambda=1}^3 (1-\sum_ap_{a|\lambda})\Pi_{\lambda}=0
\end{equation}
The linearly independence of $\{\Pi_{\lambda}\}$ implies $\sum_ap_{a|\lambda}=1$, thus $\{M^r_a\}_a$ can be simultaneous simulated by $\{\Pi_{\lambda}\}$, and therefore, $R^p_{\text{POVM}}(3)\ge R^p(3)$. We thus have $R^p_{\text{POVM}}(3)=R^p(3)=\frac{1}{2}$ which implies $ \gamma(\rho_W(\frac{1}{2}))=N^p(\frac{1}{2})$ using duality between compatibility of noisy spin measurements and steerability of Werner state as discussed in section~\ref{appendixF}.
\end{proof}
Similarly one could show that 
\begin{corollary}
  $R_{\text{POVM}}(4)=R(4)=\frac{1}{3}\Leftrightarrow  \gamma(\rho_W(\rho_W(\frac{1}{3}))=N(\frac{1}{3})=4$
\label{lem: n=4}
\end{corollary}
\begin{proposition}
$R_{\text{POVM}}^p(n)=R^p(n)\Leftrightarrow \gamma^p(\omega_r)=N^p(r)$.
\label{thm:planar_SM}
\end{proposition}
\begin{proof}
\yujie Without loss of generality, we always assume $\{\Pi_{\lambda}\}_{\lambda=1}^n$ is rank-1 and with pairwise distinct effects. Consequently, any three of its effects are linearly independent. Moreover, since these effect are planar effect (i.e. with Bloch vector on the xz plane), any four of them must be linear dependent.   \blk \par

Suppose each effect of the spin measurement (an instance of 2-outcome measurement) $\{M_a^{(r)},\;\mathbb I-M_a^{(r)}\}$ admits a simulation
\[
M_a^{(r)}=\sum_{\lambda=1}^n p_{a|\lambda}\,\Pi_\lambda,\qquad p_{a|\lambda}\in[0,1],
\]
with no assumption that $\sum_a p_{a|\lambda}=1$. Nevertheless,  since $
\sum_{\lambda=1}^n\Pi_\lambda=\sum_{a=1}^3 M_a^{(r)}=\mathbb I,$
we always have the operator identity
\[
\sum_{\lambda=1}^n\bigl(s_\lambda-1\bigr)\Pi_\lambda=0,
\qquad \text{where}\quad s_\lambda:=\sum_{a=1}^3 p_{a|\lambda}.
\]
Thus it suffices to adjust the coefficients to obtain a single column-stochastic post-processing $p_{a|\lambda}$ with $\sum_a p_{a|\lambda}=1$ for all $\lambda$ while keeping each $M_a^{(r)}$ unchanged. (Here $\lambda$ indexes columns, and $a$ indexes raw.)

\noindent\textbf{Case $n=3$.}
By linear independence of $\{\Pi_\lambda\}_{\lambda=1}^3$ the equality 
$\sum_{\lambda}(s_\lambda-1)\Pi_\lambda=0$ forces $s_\lambda=1$ for each $\lambda$. Hence $p_{a|\lambda}$ is already normalized (column-stochastic)

\noindent\textbf{Case $n=4$.}
There must exist a non-trivial linear dependent relation
\begin{align}
\sum_{\lambda=1}^4 q_\lambda\,\Pi_\lambda=0
\quad\text{with}\quad q_\lambda:=s_\lambda-1=\sum_{a}p_{a|\lambda}-1.\label{eq:lp-r}    
\end{align}

Since $\sum_\lambda q_\lambda\,\mathrm{tr}(\Pi_\lambda)=0$ and $\mathrm{tr}(\Pi_\lambda)>0$, both $\{q_\lambda>0\}$ and $\{q_\lambda<0\}$ are nonempty. Moreover, each set should have exactly two elements, otherwise, one effect equals the mixture of the other three and can not be rank-1. Without loss of generality, we assume $q_{\lambda}>0$ (or equivalently $s_{\lambda}>1$) for $\lambda\in \{1,2\}$.

Because there are only three raw $a\in\{1,2,3\}$, the condition $s_{\lambda}=\sum_a p_{a|\lambda}>1$ for $\lambda\in \{1,2\}$ implies the existence of a raw $a$ such that $p_{a|1},p_{a|2}>0$ (as $0\le p_{a|\lambda}\le 1$).

Now, we can apply the linear dependent relation in Eq~
\ref{eq:lp-r} to shift the element $p_{a|\lambda}$ in row $a$ by:
$$
p'_{a|\lambda}(\beta):=p_{a|\lambda}-\beta\,q_\lambda,\qquad \beta\ge 0
$$
to drive $s_{\lambda}':=\sum_a p'_{a|\lambda}(\beta)$ for $\lambda\in \{1,2\}$ towards 1. 

\par Because $\sum_\lambda q_\lambda\Pi_\lambda=0$, we have that
$\sum_\lambda p'_{a|\lambda}(\beta)\Pi_\lambda=\sum_\lambda p_{a|\lambda}\Pi_\lambda=M_a^{(r)}$,
so the simulation of effect $M_a^{(r)}$ is preserved. Let $\beta^\star$ be the largest value for which $p'_{a|\lambda}(\beta)\ge 0$ for all $\lambda$. At $\beta^\star$ one of the following occurs:

(i) For at least one $\lambda$ with $q_\lambda>0$, the column sum hits $1$:
$s'_\lambda(\beta^\star)=s_\lambda-\beta^\star q_\lambda=1$; or

(ii) Some entry in row $a$ hits zero: $p'_{a|\lambda}(\beta^\star)=0$ for some $\lambda$.

In case (i) it strictly fixes one column ($s_\lambda=1$) and problem reduce to the case with $n=3$, thus the column stochastic condition is met.  \par 

In case (ii) we have reduced the support of row $a$ while preserving all effects. Now, pick another row $a'$ that  $p_{a'|\lambda}>0$ for $\lambda\in \{1,2\}$ (such a row must exist, otherwise $s_1$ and $s_2$ could not exceed 1), and repeat the same update with row $a'$. 

Each iteration either fixes a column or strictly reduces a row’s support, so the process terminates only with $s_\lambda=1$ for all $\lambda$. The final coefficients $p_{a|\lambda}$ are column-stochastic and still reproduce each $M_a^{(r)}$.
\\
\yujie 
\noindent\textbf{Case $n>5$.}: Assuming $\sum_a p_{a|\lambda}\ne 1$ for all $\lambda$. We can divide these effects in two indexes set $\mc{J}$ and $\mc{K}$ based on the sign of $q_\lambda$, and thus there must exist two pairs of $\Pi_\lambda$ such that
$$\sum_\lambda q_\lambda \Pi_\lambda =\sum_{j\in \mc J}q_j\Pi_j-\sum_{k\in \mc K}q_k\Pi_k= t_{j_1}\Pi_{j_1}^p+t_{j_2}\Pi_{j_2}^p-t_{k_1}\Pi_{k_1}^p-t_{k_2}\Pi_{k_2}^p=0,$$
with $t_{\nu}>0$. This is possible since $\{\Pi_\lambda\}$ is assumed to be a rank-1 POVM with distinct effects, and we explained this in more detail in the lemma below. 

With this linear dependent relation, we can follow the same procedure as in \textbf{$n=4$} iteratively and smooth $\sum_a p_{a|i}= 1$ one by one until we get $n=3$ 
\blk 

\end{proof}
\begin{lemma}Let $\{\Pi_\lambda=\alpha_\lambda(\mathbb{I}+\hat n_\lambda\!\cdot\!\vec\sigma)\}$ be rank-1 qubit effects whose Bloch
directions $\{\hat n_\lambda\}$ are coplanar and pairwise distinct. 
Let $\mc J,\mc K$ be disjoint triples of indices and suppose there exist 
strictly positive weights $q_\lambda$ such that
\begin{equation}
\sum_{j\in\mc J} q_j\,\Pi_j \;=\; \sum_{k\in\mc K} q_k\,\Pi_k.
\label{eq:balance6}
\end{equation}
Then there exist distinct $j_1\neq j_2\in\mc J$, $k_1\neq k_2\in\mc K$ and
$t_{j_1},t_{j_2},t_{k_1},t_{k_2}>0$ with
\begin{equation}
t_{j_1}\,\Pi_{j_1}+t_{j_2}\,\Pi_{j_2} \;=\; t_{k_1}\,\Pi_{k_1}+t_{k_2}\,\Pi_{k_2},
\label{eq:2v2}
\end{equation}
\end{lemma}
\yujie 
\begin{proof}
From \eqref{eq:balance6}, equality of identity and Bloch parts gives
\begin{equation}
\sum_{j\in\mc J} q_j\alpha_j=\sum_{k\in\mc K} q_k\alpha_k
\quad\text{and}\quad
\sum_{j\in\mc J} q_j\alpha_j \hat n_j=\sum_{k\in\mc K} q_k\alpha_k \hat n_k \;=:\; \vec v.
\end{equation}
Thus $\frac{\vec v}{\sum_{j\in\mc J} q_j\alpha_j}\in\mathrm{conv}\{\hat n_j:j\in\mc J\}\cap\mathrm{conv}\{\hat n_k:k\in\mc K\}$.
Each of these convex hulls is a polygon inscribed in the same Bloch circle. Since all
vertices lie on the circle and the vertex sets are disjoint, no vertex of one polygon
is contained in the boundary or (open) interior of the other. Therefore every vertex
of the intersection $\mathrm{conv}\{\hat n_j:j\in\mc J\}\cap\mathrm{conv}\{\hat n_k:k\in\mc K\}$
is an intersection of an edge from the first polygon with an edge from the second.
Pick such a vertex and denote it by $\vec w$. Then there exist distinct
$j_1\neq j_2\in\mc J$, $k_1\neq k_2\in\mc K$ and parameters $t,s\in(0,1)$ with
\begin{equation}
\vec w \;=\; t\hat n_{j_1}+(1-t)\hat n_{j_2} \;=\; s\hat n_{k_1}+(1-s)\hat n_{k_2}.
\end{equation}
Setting \(t_{j_1}=\tfrac{t}{\alpha_{j_1}},\; t_{j_2}=\tfrac{1-t}{\alpha_{j_2}},\;
t_{k_1}=\tfrac{s}{\alpha_{k_1}},\; t_{k_2}=\tfrac{1-s}{\alpha_{k_2}}\) (all $>0$),
both sides of \eqref{eq:2v2} equal \(\mathbb{I}+\vec w\!\cdot\!\vec\sigma\).
\end{proof}
\blk
\end{appendices}
\end{document}